\newcommand{\vecw}{{{\mathbf{w}}}}
\newcommand{\calA}{{{\mathcal{A}}}}
\newtheorem{proposition}{{\bf Proposition}}
\newtheorem{definition}{{\bf Definition}}
\newtheorem{theorem}{{\bf Theorem}}
\newtheorem{corollary}{{\bf Corollary}}
\newtheorem{remark}{\bf Remark}
\newtheorem{example}{{\bf Example}}
\begin{document}

\title{Proportional Justified Representation}

\author{
Luis S\'anchez-Fern\'andez\thanks{Corresponding author. Email: luiss@it.uc3m.es}\\
\small Universidad Carlos III de Madrid, Spain
\and
Edith Elkind\thanks{This work was developed while Edith Elkind, Martin Lackner, and Piotr Skowron were at the University of Oxford, United Kingdom}\\
\small Northwestern University, United States
\and
Martin Lackner\footnotemark[2]\\
\small University of Applied Sciences St.\ P\"olten, Austria
\and
Norberto Fern\'andez Garc\'{\i}a\\
\small Escuela Naval Militar, Spain
\and
Jes\'us A.\ Fisteus\\
\small Universidad Carlos III de Madrid, Spain
\and
Pablo Basanta Val\\
\small Retired
\and
Piotr Skowron\footnotemark[2]\\
\small University of Warsaw, Poland
}

\date{}

\maketitle

\begin{abstract}
The goal of multi-winner elections is to choose a fixed-size committee based on voters' preferences.
An important concern in this setting is {\em representation}: large groups of voters with cohesive
preferences should be adequately represented by the election winners. In an influential paper,
\citet{aziz:scw} proposed two axioms
that aim to capture this idea: {\em justified representation (JR)} and its strengthening
{\em extended justified representation (EJR)}.
We observe that EJR is incompatible with the highly desirable
{\em Perfect Representation (PR)} criterion,
and propose a relaxation of EJR, which we call
{\em Proportional Justified Representation (PJR)}.
PJR is more demanding than JR, but, unlike EJR, it is compatible with PR, as well as with a stronger variant of this axiom, which we term {\em Fractional Perfect Representation (FPR)}.
Moreover, just like EJR, PJR can be used to characterise the classic Proportional Approval Voting (PAV) rule in the class of
weighted PAV rules. On the other hand, we show that EJR provides stronger guarantees
with respect to average voter satisfaction than PJR does.
\end{abstract}


\section{Introduction}
Decision-making based on the aggregation of possibly conflicting
preferences is a central problem in the field of social choice, and
as such it has received a considerable amount of attention from artificial
intelligence researchers
\cite{HBCOMSOC2016}.  The most
common preference aggregation scenario is the one where a single
candidate has to be selected. However, there are also many
applications where the goal is to select a fixed-size set of
alternatives: examples range from choosing a parliament or a committee~\cite{Monr95a,elkind:scw17}
to identifying a set of plans~\cite{ELS11a}, allocating resources~\cite{bar-coe,ELS11a}, shortlisting
candidates for a job or an award~\cite{elkind:scw17,lackner2025approval}, picking movies to be shown on a plane~\cite{elkind:scw17}, selecting validators in a blockchain consensus protocol~\cite{CeSt21a,boehmer2024approval}, putting together a slate of opinions on a controversial issue~\cite{fish2024generative}, or deciding on trees to be removed from a forest~\cite{pommerening2020democratising}. 
Multi-winner elections can also be used in  in a range of machine learning tasks, such as recommender systems, dataset construction, or core-set selection (also referred to as instance selection); we discuss several such applications in more detail towards the end of this section.

Recently,
the computational complexity of multi-winner voting
rules
and their normative
properties
have been actively explored by the artificial intelligence research
community. \citet{faliszewski2017multiwinner} give an
overview of algorithmic and axiomatic challenges posed by multi-winner elections; the survey of \citet{lackner2021approvalbased} focuses specifically on approval-based
multi-winner voting.

Multi-winner voting rules are often applied in scenarios in which
the set of winners needs to represent the different opinions or
preferences of the agents participating in the election.
Thus,
it is important to formulate axioms that capture our intuition
about what it means for a set of winners to provide a faithful 
representation of voters' preferences \cite{Monr95a,dummett,black1958theory}.

We build on the work of~\citet{aziz:scw}, who, in an influential paper,  have proposed two representation
axioms for approval-based multi-winner voting,
namely, {\it justified representation (JR)} and {\it extended justified representation (EJR)}.
Intuitively, JR requires that a large enough group of voters with similar preferences
is allocated at least one representative; EJR says that if this group is large enough
and cohesive enough, it deserves not just one, but several representatives
(see Section~\ref{sec:prel} for formal definitions).
Similar axioms have been proposed for multi-winner voting rules with ranked
ballots \cite{dummett,elkind:scw17}.
\citet{aziz:scw} show that for every collection of ballots there is a winning set
that provides EJR; they then explore a number of popular multi-winner voting rules
and show that several of these rules satisfy JR, but only one of them (namely, Proportional Approval Voting) satisfies EJR.

\subsection*{Our Contribution}
We approach the issue of proportional representation 
in approval-based multi-winner voting from a different perspective.
We start by formulating an axiom that we call {\em Perfect Representation (PR)} 
(Section~\ref{sec:pr}), which says that 
if a given instance admits a `perfect solution' (all voters are represented, and 
each winner represents the same number of voters), then we expect a voting
rule to output such a solution. This axiom is very appealing in parliamentary elections
and similar applications of multi-winner voting. However, it turns out to be 
incompatible with EJR: there is an election where these two axioms correspond to
disjoint sets of winning committees.

Motivated by this result, 
we propose a relaxation of EJR, which we call {\em Proportional Justified Representation (PJR)} (Section~\ref{sec:pjr}).
PJR is more demanding than JR, but, unlike EJR, it is compatible with perfect representation. Further, we show that, under a mild assumption, a committee that provides PJR can be computed in polynomial time; 
specifically, we demonstrate that 
a well-studied, efficiently-computable voting rule
(namely, the Greedy Monroe rule) satisfies PJR if the committee size $k$ divides the number of voters $n$. 
Moreover, just like EJR, PJR can be used to characterise the 
classic Proportional Approval Voting (PAV) within the class of weighted PAV rules~\cite{Thie95a}. 

However, we then show that the additional flexibility supplied by PJR comes at a cost (Section~\ref{sec:as}):
we define a measure of average voter satisfaction and show that
EJR provides much stronger guarantees with respect to this measure than PJR does.

Also, we formulate a stronger version of the PR axiom (Section~\ref{sec:fpr}): while the original axiom only applies when the target committee size $k$ divides the number of voters $n$, the new axiom, which we call {\em Fractional Perfect Representation (FPR)}, allows for non-integer vote allocation and applies to all values of $k$ and $n$. Interestingly, FPR is compatible with PJR; indeed, we identify a voting rule that satisfies both axioms.

We conclude the paper by discussing the implications of our results and indicating directions 
for future work
\footnote{
The preliminary version of this paper appeared in the Proceedings of the 31st AAAI Conference on Artificial Intelligence (AAAI) in 2017 \cite{pjr-aaai}. The main additions compared to the conference version are the introduction of the FPR axiom and the inclusion of additional examples, figures, and discussion.  Other additions include complete proofs of Theorems~\ref{thm:pr-hard} and~\ref{thm:mon}, and Propositions~\ref{prop:wpav-pjr} and~\ref{prop:wrav-pjr}. We also show that from the point of view of the maximum number of unrepresented voters in an $\ell$-cohesive group of voters (see Section~\ref{sec:prel} for the definition of $\ell$-cohesive group of voters), there is no difference between JR, PJR, and EJR.}.

\begin{figure}
  \centering
  \begin{tikzpicture}[
        every node/.style={align=center, font=\small},
        node distance=0.5cm and 0.6cm
    ]
    
    \node (FJR) {FJR\\Peters et al.\ (2021)};
    \node[right=0.cm of FJR] (EJRplus) {EJR+\\Brill and Peters (2023)};
    \node[right=0.1cm of EJRplus] (PR) {\textbf{PR (Def.\ 2)}};
    \node[right=0.1cm of PR] (laminar) {laminar prop.\ (Def.\ 7)\\Peters and Skowron (2020)};
    
    \node[below=0.8cm of $(PR)!0.5!(laminar)$] (FPR) {\textbf{FPR (Def.\ 4)}};
    
    \node[below=0.5cm of FPR] (price) {priceability (Def.\ 6)\\Peters and Skowron (2020)};
    
    \node[below=of price] (FPJR) {FPJR\\ Kalayci et al.\ (2025)};
    \node[left=of FPJR] (IPSC) {PJR+ / IPSC (Def.\ 5)\\Aziz and Lee (2021)};
    \node[left=of IPSC] (EJR) {EJR (Def.\ 1)\\Aziz et al.\ (2017)};
    
    \node[below=of IPSC] (PJR) {\textbf{PJR (Def.\ 3)}};
    
    \node[below=of PJR] (JR) {JR (Def.\ 1)\\Aziz et al.\ (2017)};
    
    \draw (FJR) -- (EJR);
    \draw (FJR) -- (FPJR);
    \draw (EJRplus) -- (EJR);
    \draw (EJRplus) -- (IPSC);
    \draw (PR) -- (FPR);
    \draw (laminar) -- (FPR);
    \draw (FPR) -- (price);
    \draw (price) -- (FPJR);
    \draw (price) -- (IPSC);
    \draw (EJR) -- (PJR);
    \draw (FPJR) -- (PJR);
    \draw (IPSC) -- (PJR);
    \draw (PJR) -- (JR);
    
    \end{tikzpicture}
  \caption{Hasse diagram for the representation axioms considered in this paper and other state-of-the-art papers. The lines indicate that if a committee satisfies the upper axiom, it also satisfies the lower axiom. Axioms in bold are contributions of this paper.}
  \label{fig:hasse_com}
\end{figure}

Figure~\ref{fig:hasse_com} shows the relations between the different axioms considered in this paper, including both those proposed here and those proposed by other authors. Relations are established at the level of committees: whenever there is a link between two axioms, this means that if a committee satisfies the upper axiom, then it also satisfies the lower axiom.

\subsection*{Related and Subsequent Work}
In the following, we give an overview of the research inspired by the conference version of our paper and related lines of research; for details, the reader is encouraged to consult the survey by \citet{lackner2021approvalbased}.

It is interesting to note that verifying whether a given committee satisfies PJR 
is computationally hard; \citet{aziz2018complexity} show that this problem is coNP-complete.
Further, \citet{DBLP:conf/ijcai/BredereckF0N19} prove that counting the number of committees satisfying PJR is \#P-hard.

After our conference paper, several axioms that strengthen PJR have been proposed in the literature.
\citet{peters2020proportionality} introduced another property of voting rules, which they term \emph{priceability}. 
This property is a strengthening of PJR that does not
rely on cohesive groups (cf. Definition~\ref{def:ejr}).
Similarly, \citet{aziz2021proportionally} propose a strengthening of PJR, called \emph{Inclusion Proportionality for Solid Coalitions (IPSC)}. In doing so, they build on their earlier work~\cite{aziz2020expanding}, which puts forward an axiom for weak preferences, called {\em Generalized PSC}. The generalized PSC axiom generalizes both the classical {\em Proportionality for Solid Coalitions (PSC)} due to~\citeauthor{dummett}~\cite{dummett} for strict preferences and PJR for approval voting. We discuss these axioms in detail in Section~\ref{sec:fpr}.~\citet{peters2021proportional} proposed a strengthening of EJR by relaxing the notion of a cohesive group (see Section~\ref{sec:prel} for formal definitions). They call this strengthening {\em fully justified representation} (FJR). Very recently,~\citet{kalayci2025full} applied the ideas introduced by~\citet{peters2021proportional} to define an intermediate axiom between PJR and FJR (but incomparable to EJR), which they called {\em Full Proportional Justified Representation} (FPJR). Finally,~\citet{brill2023robust} have recently proposed strengthenings of PJR and EJR, which they have called, respectively, PJR+ and EJR+. A salient property of PJR+ and EJR+ is that, in contrast to PJR and EJR, it is possible to verify whether a committee satisfies PJR+ or EJR+ in polynomial time.

\citet{pet:prop-sp} showed that even very weak
notions of proportionality (and, in particular, PJR) are incompatible
with a natural notion of strategyproofness.\footnote{Specifically, the notion of strategyproofness used by~\citet{pet:prop-sp} is that a rule is strategyproof if
no voter with ballot $A$ can change the outcome of the election by 
submitting a ballot $A'\subsetneq A$ to obtain more representatives, 
i.e., so that for the original committee $W$
and the new committee $W'$ it holds that $W\cap A\subsetneq W'\cap A$.}

Approval-based multi-winner voting is closely related to the \emph{apportionment} problem~\cite{BaYo82a,puk},
which arises, e.g., in parliamentary elections when distributing seats to parties.
This relation has been explored in depth by \citet{brill2018multiwinner}, including
the impact of PJR in this setting. An important result proved in this paper is that any multi-winner voting 
rule that satisfies PJR induces an apportionment method that satisfies lower quota.
Moreover, \citet{DBLP:conf/aaai/BrillGPSW20} consider PJR 
in a setting that generalizes apportionment but is still somewhat less expressive
than the standard approval-based multi-winner setting. We formally define apportionment problems and discuss their relationship with FPR in Section~\ref{sec:fpr}.

The concept of PJR has also been used in the context of \emph{participatory budgeting}.
\citeauthor{DBLP:conf/atal/0001LT18}~\cite{DBLP:conf/atal/0001LT18} extend 
PJR to this setting and propose several related axioms.
Furthermore, PJR has also been adapted to approval-based elections with a variable number of winners~\cite{DBLP:conf/ijcai/FreemanKP20} (this model
differs from the standard model in that the size of the committee is not fixed), and to sub-committee voting~\cite{DBLP:conf/aies/0001L18}.\footnote{Sub-committee voting is a generalization of several preference aggregation scenarios, including single-winner voting, multi-winner voting, and multiple referenda. Briefly, the goal of a sub-committee election is to select several pairwise disjoint sub-committees, 
so that all voters participate in the election of all the sub-committees. Each sub-committee has a fixed size, but the sizes of different sub-committees may differ.}

Another important contribution of this paper is the concept of \emph{average satisfaction} (Section~\ref{sec:as}).
This concept, together with the related concept of {\em proportionality degree}, 
has been refined and used in several works to evaluate the performance of various multi-winner voting rules \cite{aziz2018complexity,proprank,skowron:prop-degree}.

An experimental evaluation of PJR and related notions of proportionality 
has been performed by \citet{DBLP:conf/ijcai/BredereckF0N19}.
Their numerical simulations show that in a typical election many committees satisfy even EJR (and thus PJR),
and therefore these proportionality requirements are not by themselves sufficient to select committees.

We defined PJR in terms of the {\em Hare quota} $\frac{n}{k}$. Recently, \citet{casey} considered a strengthening of PJR (and other justified representation axioms) obtained by replacing the Hare quota with the {\em Droop quota} $\frac{n}{k+1}$, and observed that many voting rules that satisfy PJR also satisfy (or can be modified to satisfy) the Droop version of this axiom.

\subsection{Fair Representation and AI}
Before presenting our formal model, we would like to offer a few examples of AI applications that make use of fair representation ideas.

The first application we would like to discuss is that of recommender systems, which are ``one of the most popular applications of artificial intelligence''~\cite{ricci2015recommender}. A typical recommender system collects data about the preferences of the users, 
such as ratings or clicks on links, and uses this data to make recommendations to other users. 
Some state-of-the-art recommender systems actively use social choice techniques~\cite{gawron2022using}. In particular, some applications aim to produce recommendations for a group of users, first coming up with suggestions for each member of the group and then aggregating them into a joint recommendation. In this scenario, it may be desirable for the recommendation system to produce a set of recommendations that represent various groups of users, i.e., take into account the interests of minorities rather than focus exclusively on the interests of the majority within the group. These types of recommender systems are relevant for tourism, as discussed by \citet{streviniotis2022preference}, or for the generation of trending topics in a social network~\cite{chakraborty2019equality}, and a number of authors have explored using the toolbox of multi-winner voting in this context~\cite{SKOWRON2016191,lu2011budgeted,streviniotis2022preference}.  

The concepts of fairness and proportional representation are also receiving growing attention in the context of machine learning and its applications, where an important desideratum is to avoid minorities being misrepresented or harmed by decisions of machine learning algorithms~\cite{angwin2022machine,chouldechova2020snapshot}. Fair representation concerns also play a key role in the construction of models and datasets,  where it is crucial to incorporate various data sources in a proportional way; see, for instance, the work on generative modeling~\cite{goodfellow2020generative} or core-set selection (further discussed below)~\cite{garcia2015data}. 
A number of works  translate fairness notions from the computational social choice literature to the domain of clustering~\cite{chen2019proportionally,aziz2024proportionally,caragiannis2024proportional,kellerhals2024proportional,micha2020proportionally,kalayci2024proportional}. Examples of applications in which fairness concerns are relevant in clustering range from achieving precise personalised recommendations~\cite{chen2019proportionally}
to selecting a subset of the instances in a dataset to label so as to obtain a faithful representation of the original dataset in preparation for a machine learning task~\cite{sener2018active}.

Another machine learning topic where the use of representation concepts from computational social choice has been explored is core-set selection, also known as instance selection. Core-set selection~\cite{garcia2015data} is a
preprocessing task in machine learning (or data mining) that aims at
selecting a subset of the data instances to form the training set,
 to be used by a machine learning algorithm. There are two main reasons for performing this task: efficiency and cleaning.
Storing, preprocessing, and training models on large datasets requires costly computing resources. Thus, it is desirable to identify a subset of the original dataset that allows the machine learning task to be performed with little or no performance loss. A scenario in which this problem is particularly important is that of {\it class incremental learning} or {\it continual learning}~\cite{rebuffi2017icarl}. In class incremental learning, we have a stream that, from time to time, receives a new class along with a set of instances belonging to that class. At specific time points, we need to build a classifier that can recognise all the classes seen so far. Storing all received instances would be too costly; therefore, for each class observed so far, it is necessary to select a subset of the instances of that class that will be stored. Another purpose of core-set selection is cleaning: often, the datasets used in machine learning contain noisy instances, which can lead to errors in the classifier. Removing these noisy instances can improve the performance of a machine learning algorithm. 

\citeauthor{sanchezfernandez2023data}~\cite{sanchezfernandez2023data} conjecture that using representative approval-based multi-winner voting rules would allow for obtaining a representative subset of the original dataset and thereby reduce the size of the dataset with little performance cost. They establish a formal guarantee that, under certain conditions, a $K$ Nearest Neighbours ($K$NN) classifier trained on a subset of the original dataset obtained by applying a multi-winner voting rule that satisfies the proportional justified representation\footnote{More precisely, it suffices for the voting rule to satisfy $\ell$-PJR with $\ell=\frac{K+1}{2}$ for a $K$NN classifier and $K$ odd.} axiom, introduced in this paper, will correctly classify the instances from the original dataset. 

\section{Preliminaries}\label{sec:prel}
Given a positive integer $s$, we denote the set $\{1, \dots, s\}$ by $[s]$.
We consider elections with a set of 
{\em voters} $N = \{1, \dots, n\}$ and a set of {\em candidates} 
$C= \{c_1, \dots, c_m\}$. Each voter $i \in N$ submits an {\em approval ballot} 
$A_i \subseteq C$, which represents the subset of candidates that she
approves of. We refer to the list $\mathcal{A}= (A_1, \dots, A_n)$ 
as the {\it ballot profile}. 
An {\em approval-based multi-winner voting rule} takes as input 
a tuple $(N, C, \mathcal{A}, k)$, where $k$ is a positive integer that satisfies 
$k \leq |C|$ (we will refer to such tuples as {\em elections}), and returns a subset $W \subseteq C$ of size $k$, which we
call the {\it winning set}, or {\em committee}. We omit $N$ and $C$ from the notation when
they are clear from the context.

The following voting rules have received a considerable amount
of attention in the literature on proportional representation
\cite{kilgour10,elkind:scw17,aziz:scw}:

\smallskip

\noindent {\bf Thiele rules and Proportional Approval Voting (PAV)\ } 
Every vector $\vecw= (w_1, w_2, \ldots)$, where $w_1,
w_2, \ldots$ are non-negative reals, $w_1 = 1$ and $w_1 \geq w_2 \geq
\ldots$, defines a voting rule, which we will call the {\em $\vecw$-Thiele rule}. Under this rule, if a voter approves $p$ candidates in a committee $W$
then she assigns a score of $w_1+\dots+w_p$ to $W$; the scores are then summed
over all voters. Formally, 
given a ballot profile $(A_1, \dots, A_n)$ and a target number of winners $k$,
this rule computes the {\em $\vecw$-Thiele score} of a committee $W$ of size $k$
as
$$
\sum_{i \in N}\sum_{j=1}^{|W\cap A_i|}w_j, 
$$
and returns a committee with the highest score. 

A particularly important and well-studied rule in this family is 
Proportional Approval Voting (PAV), which uses the harmonic weight vector $(1, \frac12, \frac13, \dots)$. 

\smallskip

\noindent {\bf Sequential Thiele rules and Seq-PAV\ }
For each $\vecw$-Thiele rule, we define its sequential analogue {\em Seq-$\vecw$-Thiele}. 
This rule proceeds in $k$ rounds, adding one candidate to the committee in each round.
Specifically, in round $t$ it selects a candidate that provides the maximum improvement
to the total $\vecw$-Thiele score of the committee selected in the first $t-1$ rounds.
Formally, the Seq-$\vecw$-Thiele rule starts by
setting $W = \varnothing$. Then in round $t, t\in[k]$, it
computes the {\it approval weight} of each candidate $c\in C\setminus W$ as
$$
\sum_{i: c \in A_i} w_{|W \cap A_i|+1}, 
$$
where $W$ is the winning set after the first $t-1$ rounds,
selects a candidate in $C\setminus W$ with the highest approval weight, 
and adds it to $W$. 
Seq-PAV is the Seq-$\vecw$-Thiele rule for the weight vector 
${\vecw} = (1, \frac12, \frac13, \dots)$.

\smallskip 

\noindent{\bf The Monroe rule\ }
For each voter $i\in N$ and each candidate $c\in C$ 
we write $u_i(c)=1$ if $c\in A_i$ and $u_i(c)=0$ if $c\not\in A_i$.
Given a committee $W\subseteq C$ of size $k$, we say that a mapping
$\pi:N\to W$ is {\em valid} if it satisfies
$|\pi^{-1}(c)|\in\left\{\lfloor\frac{n}{k}\rfloor, \lceil\frac{n}{k}\rceil\right\}$ for each $c\in W$.
The {\em Monroe score} of a valid mapping $\pi$ is given by $\sum_{i\in N} u_i(\pi(i))$,
and the {\em Monroe score} of $W$ is the maximum score of a valid mapping from $N$ to $W$.
The {\em Monroe rule} returns a committee of size $k$ with the maximum Monroe score.

\smallskip

\noindent{\bf The Greedy Monroe rule\ }
Given a ballot profile $\calA = (A_1,\dots, A_n)$ over a candidate set $C$ and a target committee size $k$,
the Greedy Monroe rule proceeds in $k$ rounds. It maintains the set of available candidates $C'$ and the set of
unsatisfied voters $N'$; initially $C'=C$ and $N'=N$. It starts by setting $W=\varnothing$.
In round $t$, $t = 1,\dots,k$, it selects a candidate $c_t$ from $C'$ and a group of
voters $N_t$ from $N'$ of size approximately $\frac{n}{k}$ (specifically, $\lceil\frac{n}{k}\rceil$ if
$t \le n - k\lfloor\frac{n}{k}\rfloor$, and
$\lfloor\frac{n}{k}\rfloor$ if $t > n-k\lfloor\frac{n}{k}\rfloor$)
so as to maximize the quantity $|\{i\in N_t: c_t\in A_i\}|$ over all possible
choices of $(N_t,c_t)$. The candidate $c_t$ is then added to $W$,
and we set $C'=C'\setminus\{c_t\}$, $N'=N'\setminus N_t$.
We say that the candidates in $N_t$ are {\em assigned} to $c_t$.
After $k$ rounds, the rule outputs $W$.

\smallskip

All rules we have defined may have to break ties: e.g., there could be
multiple committees with the maximum $\vecw$-Thiele or Monroe score, 
and for sequential rules there may be multiple candidates that maximize
the relevant quantity in a given round. Unless explicitly indicated
otherwise, the results in our paper hold irrespective of 
the tie-breaking mechanism. For instance, when we prove that a rule satisfies a certain axiom, we show that this is true for all possible ways of breaking ties; conversely, when we establish that a rule violates an axiom, our counterexamples do not rely on tie-breaking. 

\smallskip

Thiele rules and their sequential variants 
(and in particular PAV and Seq-PAV) were defined by \citeauthor{Thie95a}~\cite{Thie95a}.
The Monroe rule was proposed by Monroe~\cite{Monr95a}. Greedy Monroe
is due to \citeauthor{SFS15}~\cite{SFS15} (more precisely, \citeauthor{SFS15}
define this rule for the setting where ballots are rankings of the candidates;
we adapt their definition to approval ballots). For PAV and Monroe
finding a winning committee is NP-hard \cite{AGG+14a,procaccia:complex},
whereas for Seq-PAV and Greedy Monroe winning committees can be computed in polynomial time;
in fact, Seq-PAV and Greedy Monroe were originally proposed as approximation algorithms
for PAV and Monroe, respectively.

We will now define the key concepts from the work of \citeauthor{aziz:scw}~\cite{aziz:scw}: 
{\em justified representation} and {\em extended justified representation}. 

Given an election $(\mathcal{A}, k)$ with a set of voters $N=\{1, \dots, n\}$
and a positive integer $\ell\in[k]$, 
we say that a set of voters $N^*\subseteq N$ is {\em $\ell$-cohesive}
if $|N^*| \geq \ell\cdot \frac{n}{k}$ and $|\bigcap_{i \in N^*} A_i| \geq \ell$.

\begin{definition}\label{def:ejr}
{\bf (Extended) justified representation ((E)JR)}  
  Consider a ballot
  profile $\mathcal{A}= (A_1, \dots, A_n)$ over a candidate set $C$,
  and a target committee size $k$, $k \leq |C|$. 
    A set of candidates $W$ is said to provide {\em $\ell$-justified representation ($\ell$-JR)} 
  for $(\mathcal{A}, k)$ if there does not exist an $\ell$-cohesive set of voters
  $N^*$ such that $|A_i \cap W| < \ell$ for each $i \in N^*$. 
  We say that $W$ provides {\em justified representation (JR)} for $(\mathcal{A}, k)$
  if it provides $1$-JR for $(\mathcal{A}, k)$; it provides {\em extended justified
  representation (EJR)} for $(\mathcal{A}, k)$ if it provides $\ell$-JR
  for $(\mathcal{A}, k)$ for all $\ell\in[k]$. 
  An approval-based voting rule {\em satisfies $\ell$-JR} 
  if for every ballot profile $\mathcal{A}$ 
  and every target committee size $k$, it outputs a
  committee that provides $\ell$-JR for $(\mathcal{A}, k)$. 
  A rule {\em satisfies} JR (respectively, EJR)
  if it satisfies $\ell$-JR for $\ell=1$ (respectively, for all $\ell\in[k]$).
\end{definition}

By definition, EJR implies JR. \citeauthor{aziz:scw}~\cite{aziz:scw}
characterise $\vecw$-Thiele rules that satisfy JR, and establish that
PAV is the only rule in this class that satisfies EJR.
Further, they prove that the Monroe rule satisfies JR, but fails EJR, 
and Seq-PAV fails JR for sufficiently large values of $k$; 
they do not consider Greedy Monroe in their work.

We use figures to represent examples of approval-based multi-winner elections, using the following convention: the columns correspond to voters, the shapes (typically, rectangles) correspond to candidates, and each voter approves all candidates whose shapes intersect her column. E.g.,   
in Figure~\ref{fig:ex-thm-pr-ejr} voter~3 approves candidate $c_3$ only, whereas voter~5 approves candidates $c_1, c_5$, and $c_6$.

\section{Perfect Representation}\label{sec:pr}
A key application of multi-winner voting is parliamentary elections, where
an important goal is to select a committee that reflects as fairly
as possible the different opinions or preferences that are present in a society.
Fairness in this context means that each committee member should represent approximately
the same number of voters, and as many voters as possible should be represented 
by a committee member that they approve.
From this perspective, the best-case scenario
is when each voter is represented by
a candidate that she approves and each winning candidate represents exactly the
same number of voters. Thus, we may want our voting rules to output
committees with this property whenever they exist. This 
motivates the following definition.

\begin{definition}
{\bf Perfect representation (PR)} Consider a ballot profile
$\mathcal{A} = (A_1, \dots,$ $A_n)$ over a candidate set $C$, and a
target committee size $k$, $k \leq |C|$, such that $k$ divides $n$.
We say that a set of candidates $W$, $|W| = k$, {\em provides perfect
representation (PR) for $(\calA, k)$} 
if it is possible to partition $N$
into $k$ pairwise disjoint subsets $N_1, \dots, N_k$ of size $\frac{n}{k}$ each
and assign a distinct candidate from $W$ to each of these subsets
in such a way that for each $\ell\in[k]$
all voters in $N_\ell$ approve their assigned member of $W$.
An approval-based voting rule {\em satisfies PR} 
if for every profile $\mathcal{A}$
and every target committee size $k$, the rule outputs
a committee that provides PR for $(\calA, k)$ whenever such a committee exists.
\end{definition}

An example of a voting rule that satisfies PR is the Monroe rule.
Indeed, if $k$ divides $n$, a size-$k$ 
committee provides perfect representation 
for an $n$-voter ballot profile if and only if its Monroe score is $n$,
i.e., the maximum possible score. This is not surprising because the ideas of representation captured by PR 
that we have discussed at the beginning of this section are similar to those expressed by Monroe~\cite{Monr95a}.

We note that the PR axiom is quite demanding from a computational perspective:
the problem of deciding whether there exists a committee that 
provides PR for a given pair $(\calA, k)$ is NP-complete.
Specifically, the hardness result is a straightforward adaptation of a proof of~\citeauthor{procaccia:complex}~\cite{procaccia:complex}, while
showing that this problem is in NP 
proceeds by a reduction to b-matching.

\begin{theorem}\label{thm:pr-hard}
Given a ballot profile $\calA$ and a target committee size $k$ that divides the number of voters $n$, it is {\rm NP}-complete
to decide whether there exists a committee that provides PR for $(\calA, k)$. 
\end{theorem}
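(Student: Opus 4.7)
The plan is to prove NP-completeness by the standard two-step argument: membership in NP, plus a polynomial-time reduction from a known NP-hard problem. Membership in NP is immediate: a polynomial-size certificate consists of the committee $W$ together with the partition $N_1,\dots,N_k$ and the assignment of members of $W$ to the $N_\ell$'s; verifying that $|W|=k$, that the $N_\ell$'s have size $n/k$ and partition $N$, and that every voter in $N_\ell$ approves her assigned candidate, is all doable in polynomial time.

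For hardness, I would reduce from \textsc{Exact Cover by 3-Sets} (X3C), which is well known to be NP-complete. An X3C instance consists of a universe $U$ with $|U|=3q$ and a family $\mathcal{S}=\{S_1,\dots,S_m\}$ of 3-element subsets of $U$; the question is whether there exist $q$ sets in $\mathcal{S}$ whose union is $U$ (and hence form a partition). Given such an instance, I construct a ballot profile $\calA$ as follows: introduce one voter $v_u$ for each $u\in U$ (so $n=3q$) and one candidate $c_S$ for each $S\in\mathcal{S}$; voter $v_u$ approves candidate $c_S$ iff $u\in S$. Set the target committee size to $k=q$, so that $n/k=3$. This construction is clearly polynomial in the input size.

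The correctness of the reduction is straightforward in both directions. If $\{S_{i_1},\dots,S_{i_q}\}$ is an exact cover, then $W=\{c_{S_{i_1}},\dots,c_{S_{i_q}}\}$ provides PR: for each $\ell\in[q]$ let $N_\ell=\{v_u : u\in S_{i_\ell}\}$, which has size exactly $3=n/k$; the $N_\ell$'s partition $N$ since the sets form an exact cover, and every voter in $N_\ell$ approves $c_{S_{i_\ell}}$ by construction. Conversely, suppose $W=\{c_{S_{i_1}},\dots,c_{S_{i_q}}\}$ provides PR via a partition $N_1,\dots,N_q$ with $c_{S_{i_\ell}}$ assigned to $N_\ell$. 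Since $|N_\ell|=3$ and every voter $v_u\in N_\ell$ must satisfy $u\in S_{i_\ell}$ (otherwise $v_u$ would not approve her assigned candidate), and $|S_{i_\ell}|=3$, we must have $N_\ell=\{v_u : u\in S_{i_\ell}\}$; as the $N_\ell$'s partition $N$, the corresponding sets $S_{i_1},\dots,S_{i_q}$ form an exact cover of $U$.

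There is no real obstacle here; the one point that deserves a sentence of care is verifying that the size constraint $|N_\ell|=n/k$ in the definition of PR exactly matches the requirement that each selected 3-set contributes exactly three elements, which is what forces the cover to be exact rather than merely a cover. This is the reason X3C (rather than \textsc{Set Cover}) is the natural source problem: the rigid sizes in the PR definition align perfectly with the uniform set sizes in X3C.
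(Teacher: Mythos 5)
Your proof is correct and the hardness reduction is essentially identical to the paper's: the paper also reduces from X3C by creating one voter per universe element, one candidate per 3-set, approvals given by membership, and $k=n/3$. The only difference is in the NP-membership argument: you put the partition $N_1,\dots,N_k$ and the assignment directly into the certificate, whereas the paper certifies only the committee $W$ and shows that verifying whether $W$ provides PR reduces to a polynomial-time-solvable b-matching problem; the latter buys the slightly stronger fact (used later in the paper's discussion) that \emph{checking} whether a given committee provides PR is itself polynomial-time decidable, not merely that a witness exists. Both routes are valid for establishing membership in NP.
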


\begin{proof}
To show that this problem is in NP, we will argue that, given
a ballot profile $\calA=(A_1, \dots, A_n)$ and a committee $W$ of size $k$,
where $k$ divides $n$, we can reduce the problem of deciding  whether $W$ provides PR
for $(\calA, k)$ to finding a b-matching in a bipartite graph.

Recall that an instance of a b-matching problem is given by a bipartite graph $G$ with parts
$L$ and $R$ and edge set $E$, where each node $u\in L\cup R$ is associated with a capacity $\kappa(u)$, which is a non-negative integer.
It is a `yes'-instance if there is a collection $E'\subseteq E$ of edges of $G$ such that each node $u\in L\cup R$
is incident with exactly $\kappa(u)$ edges of $E'$. It is possible to check in polynomial time if 
a given instance of the b-matching problem is a `yes'-instance \cite{bmatch}.

Now, given a ballot profile $\calA=(A_1, \dots, A_n)$ and a committee $W$ of size $k$,
where $k$ divides $n$,
we construct a bipartite graph with parts $N$ and $W$ where there is an edge from $i\in N$
to $c\in W$ if and only if $c\in A_i$. We set
the capacity of each node in $N$ to $1$ and the capacity of each node
in $W$ to $\frac{n}{k}$. Clearly, $W$ provides PR if and only if 
our instance of b-matching is a yes-instance.

To prove NP-hardness, we modify an argument due to~\citeauthor{procaccia:complex}~\cite{procaccia:complex},
which shows that finding a committee whose Monroe score is at least as high as a given bound is NP-hard.
Specifically,~\citeauthor{procaccia:complex}~\cite{procaccia:complex} transform an instance of
the classic NP-hard problem {\sc Exact Cover by 3-Sets (X3C)}~\cite{garey1979computers} into an approval-based multi-winner election
with $n= 3k$, by mapping elements of the ground set to voters and size-3 sets to candidates, and
observe that the original instance of X3C admits an exact cover if and only if the resulting election admits a committee whose Monroe score is equal to $n$. For such an election, the Monroe rule requires that each candidate represents exactly $3$ voters, and a committee has Monroe score of $n$ if and only if it provides PR. This proves the hardness of deciding whether there exists a committee that provides PR for a given election. For details of the reduction, we refer to the work of~\citeauthor{procaccia:complex}~\cite{procaccia:complex}.

\end{proof}

\begin{remark}
Theorem~\ref{thm:pr-hard} immediately implies that, unless {\em P\,=\,NP}, Seq-PAV and Greedy Monroe
fail PR, in the sense that, for some way of breaking ties, these rules output 
committees that do not provide PR. 
It is also not hard to construct specific examples on which these rules fail PR
in this sense. Both PAV and seq-PAV fail PR in an even stronger sense: 
for $k=4$ the election depicted in Figure~\ref{fig:ex-thm-pr-ejr}
admits a unique PR committee (namely, $\{c_1, c_2, c_3, c_4\}$), 
but it cannot be output by PAV 
or seq-PAV, no matter how we break ties.
In contrast, Greedy Monroe can always break ties so as to output a PR
committee if it exists. Indeed, if a committee $W=\{c_1, \dots, c_k\}$
provides PR, Greedy Monroe can select $c_t$ in step $t$ for $t=1, \dots, k$:
this choice is consistent with the definition of the rule. 
\end{remark}

Viewed from a different perspective, PR is a rather weak axiom:
it only constrains the behavior of a voting rule on inputs that 
admit a committee that provides PR. In particular, this axiom has no bite
if $k$ does not divide $n$. Also, unlike EJR, PR does not engage with the idea
that a voter may benefit from being represented by more than one candidate. 
Thus, we may want a voting rule to satisfy both PR
and another representation axiom, such as, e.g., EJR. However, this turns out 
to be impossible: PR and EJR are incompatible.

\begin{theorem}\label{thm:pr-ejr} 
There exists a ballot profile $\mathcal{A}$ and a target committee size $k$ 
such that the set of committees that provide PR for $(\calA, k)$ is non-empty, 
but none of the committees in this set provides EJR.
\end{theorem}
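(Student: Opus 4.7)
The plan is to prove this by exhibiting an explicit counterexample: a ballot profile $\calA$ together with a committee size $k$ such that $k\mid n$, the set of committees providing PR for $(\calA,k)$ is nonempty, and yet every such PR-committee fails to provide EJR. In spirit, one builds a profile whose unique $\ell$-cohesive group $N^*$ is forced by PR's balanced-assignment constraint to disperse its voters across ``private'' candidates rather than across its common-approval set $S$, so that no voter in $N^*$ accumulates $\ell$ approvals in the winning committee.

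Concretely, I would start from an $\ell$-cohesive ``inner'' block $N^*$ of size $\ell n/k$, partitioned into $\ell$ subgroups of size $n/k$; the voters in subgroup $j$ all approve exactly $S\cup\{f_j\}$, where $S=\{s_1,\ldots,s_\ell\}$ is a fixed $\ell$-element ``common-approval'' set and $f_j$ is a private candidate shared only within that subgroup. To this I add outside voters arranged as rigid blocks of exactly $n/k$ voters, each block approving a single distinct candidate outside $S\cup\{f_1,\ldots,f_\ell\}$; these rigid blocks anchor $k-\ell$ further candidates into any PR-committee. The parameters are chosen so that the rigid blocks plus the $f_j$'s together account for exactly $k$ committee slots, giving one natural PR-committee $W_0=\{\text{rigid candidates}\}\cup\{f_1,\ldots,f_\ell\}$.

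The verification proceeds in two steps. First, the existence of a PR-committee is witnessed by $W_0$: each rigid block is assigned to its single approved candidate, and each inner subgroup is assigned to its $f_j$; every winner has exactly $n/k$ approving voters. Second, to show every PR-committee fails EJR, one argues by slot-counting: in any PR-committee, each rigid block must be placed on its forced candidate (else its voters are unplaceable), so $k-\ell$ slots are used up by outside candidates; the remaining $\ell$ slots must be shared among $S$ and the $f_j$'s in a way that respects the $n/k$-per-winner rule. A careful tuning of the construction ensures that in every PR-committee $W$, at most $\ell-1$ elements of $S$ can appear together, and each inner voter contributes at most one further approved candidate in $W$ (its own $f_{j(i)}$, if present), so $|A_i\cap W|\le\ell-1<\ell$ for every $i\in N^*$. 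This contradicts the EJR condition for the $\ell$-cohesive set $N^*$.

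The main obstacle is precisely this slot-counting step: for naive parameter choices, there tends to coexist a ``symmetric'' PR-committee $W_{\mathrm{sym}}$ containing all of $S$ (obtained by assigning inner subgroup $j$ entirely to $s_j$, using each $s_j$'s $n/k$ slots exactly), and such a $W_{\mathrm{sym}}$ trivially satisfies EJR. Ruling out $W_{\mathrm{sym}}$ requires building into the profile an additional ``blocker'' of outside voters that consumes enough of each $s_j$'s capacity (or otherwise obstructs the partition of $N^*$ across $S$) so that the assignment $N^*\to S$ becomes PR-infeasible, while still leaving $W_0$ PR-feasible. Finding parameters that achieve this simultaneous infeasibility of $W_{\mathrm{sym}}$ and feasibility of $W_0$ is the technical heart of the proof, and the detailed calculations are carried out on a small concrete instance.
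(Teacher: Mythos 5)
There is a genuine gap here, and you have in fact put your finger on it yourself without closing it. In your construction the rigid blocks force $k-\ell$ candidates $g_1,\dots,g_{k-\ell}$ into any PR-committee, leaving $\ell$ free slots, and the committee $W_{\mathrm{sym}}=\{g_1,\dots,g_{k-\ell}\}\cup S$ \emph{is} a PR-committee (assign rigid block $t$ to $g_t$ and inner subgroup $j$ to $s_j$; every winner gets exactly $n/k$ approving voters) and it trivially gives every inner voter $\ell$ approved winners, hence satisfies the EJR condition for $N^*$. You acknowledge that a ``blocker'' is needed to make $W_{\mathrm{sym}}$ PR-infeasible while keeping $W_0$ PR-feasible, but you never exhibit one, nor the ``small concrete instance'' on which the calculations are supposedly carried out. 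Since the theorem is an existence statement proved by counterexample, deferring the counterexample is deferring the entire proof; as written, the profile you describe does not witness the claim.

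The paper's fix is a small but essential twist on your setup: it merges the rigid candidates with the private candidates. Take $k=4$, $n=8$, $C=\{c_1,\dots,c_6\}$, with $A_i=\{c_i\}$ and $A_{i+4}=\{c_i,c_5,c_6\}$ for $i=1,\dots,4$. The four singleton voters force all of $c_1,\dots,c_4$ into any PR-committee, and since this already fills all $k=4$ slots, there is no room left for any element of the common set $S=\{c_5,c_6\}$ --- the symmetric committee is killed not by a capacity blocker but by slot exhaustion. The unique PR-committee $\{c_1,c_2,c_3,c_4\}$ exists (pair voter $i$ with voter $i+4$ on $c_i$), and the $2$-cohesive group $\{5,6,7,8\}$ then has $|A_i\cap W|=1<2$ for each of its members, so EJR fails. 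If you want to salvage your more general $\ell$-parametrized family, the same device works: let the private candidate $f_j$ of inner subgroup $j$ also be the sole approved candidate of a disjoint rigid block, and choose the numbers so that the $f_j$'s together with the remaining rigid candidates already occupy all $k$ slots.
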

\begin{proof}
Let $C=\{c_1, \dots, c_6\}$, and consider a ballot profile $\calA=(A_1, \dots, A_8)$
where $A_i=\{c_i\}$, $A_{i+4}=\{c_i, c_5, c_6\}$ for $i=1, \dots, 4$
(see Figure~\ref{fig:ex-thm-pr-ejr}). Let $k=4$.
Observe that $W=\{c_1, c_2, c_3, c_4\}$ is the unique committee of size $4$
that provides PR for $(\calA, 4)$. However, $W$ fails to provide EJR:
$\{5, 6, 7, 8\}$ is a $2$-cohesive set of voters, but each of these
voters only approves one candidate in $W$ (all 
committees that provide EJR for this instance 
contain $c_5$ or $c_6$).
\end{proof}

\begin{figure}
  \centering
  \includegraphics[width=0.5\textwidth]{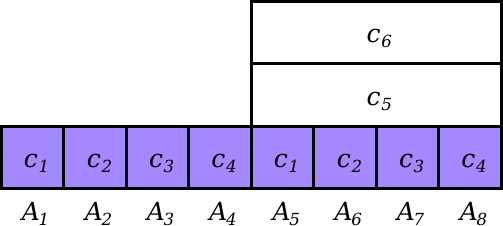}
  \caption{Ballot profile in the proof of Theorem~\ref{thm:pr-ejr}.}
  \label{fig:ex-thm-pr-ejr}
\end{figure}

This motivates the following question: can we find a weakening of the EJR axiom
that still provides meaningful guarantees to large cohesive groups of voters, yet
is compatible with PR? We address this question in the next section.

\begin{remark}
  
PR is also incompatible with the
core~\cite{aziz:scw,peters2020proportionality}, a concept inherited
from cooperative game theory. A size-$k$ committee $W$ is in
the core of an election $(\calA, k)$ if there does not exist a set
of voters $N^*\subseteq N$ and a set of candidates $D\subseteq C$ with
$|D|\le |N^*|\cdot \frac{k}{n}$ such that $|D \cap A_i| > |W \cap
A_i|$ for each voter $i$ in $N^*$. It is an open problem whether
the core of every multiwinner election with approval ballots is non-empty, 
but it is known that every committee in the core provides EJR \citep{aziz:scw}. The example used
in the proof of Theorem~\ref{thm:pr-ejr} also shows the
incompatibility between PR and the core. The committee $W= \{c_1, c_2,
c_3, c_4\}$ is not in the core because each of the voters in the
subset $\{5,6,7,8\}$ has two of their preferred candidates in $\{c_5,
c_6\}$ and only one in $W$. In this election the core is not empty:
for instance, $\{c_1,c_2,c_5,c_6\}$ is in the core.
\end{remark}

\subsection{Further Examples}\label{sec:fex}

In this section, we further discuss the significance of the PR axiom.
To this end, we consider two different scenarios and explore
the implications of this axiom in each case.
It will be convenient to illustrate our arguments by means of the following example, 
which is a particular instance of a family of elections discussed by~\citet[p.~55]{lackner2021approvalbased} (we will also use this 
election family in Example~\ref{ex:avs-pjr}).

\begin{example}
\label{ex:consensus}
Let $C=\{c_1, \dots, c_5, d_1, \dots, d_5\}$, and consider a ballot profile 
$\calA=(A_1, \dots, A_5)$
where $A_i=\{c_1, c_2, c_3, c_4, c_5, d_i\}$ for $i=1, \dots, 5$
(see Figure~\ref{fig:ex-consensus}).
Suppose that the target committee size is $5$. Consider the following two committees: 
$W^c= \{c_1, c_2, c_3, c_4, c_5\}$ and $W^d= \{d_1, d_2, d_3, d_4, d_5\}$. We will refer to $W^c$ as the `consensus committee' (and to $c_1, \dots, c_5$ as the `consensus candidates') and to $W^d$ as the `disagreement committee' (and to $d_1, \dots, d_5$ as the `disagreement candidates'). Then both $W^c$ and $W^d$ provide PR, but only $W^c$ provides EJR (to see that $W^d$ fails to provide EJR, note that the set of all voters forms a $5$-cohesive group, so EJR demands that at least
one voter approves $5$ candidates in the selected committee).
\end{example}

\begin{figure}
  \centering
  \includegraphics[width=0.5\textwidth]{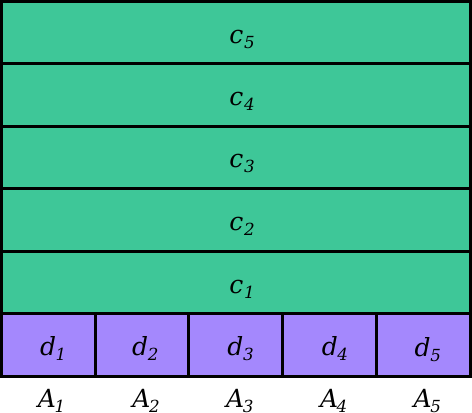}
  \caption{Ballot profile in Example~\ref{ex:consensus}.}
  \label{fig:ex-consensus}
\end{figure}

\subsubsection{Voting on Issues}
Consider a multiwinner election where both voters and candidates express their opinions on a number of `yes/no' issues, and the eventual decision on each issue is taken by the selected committee via a majority vote. In this case, 
it is desirable that, on as many issues as possible, the committee decisions follow the opinion of the majority of the voters. 
We note that this setting has been studied by the (computational) social choice community (see, e.g.,~\cite{anscombe1976frustration,abramowitz2024flexible,constantinescu2023computing}). 

It is known that even if there are just two candidates that disagree on every issue and the goal is to select one of them (i.e., $k=1$), each voter selects the candidate who agrees with her on a majority of issues, and the candidate selected by a majority of voters is declared the election winner, the winner may disagree with the opinion of the majority of voters on each issue; this phenomenon is known as the 
Ostrogorski paradox (see, e.g., \cite{kelly19896,rae1976ostrogorski}). 

In what follows, we consider a multiwinner variant of this paradox that occurs when voters select candidates that match perfectly with their preferences on the issues considered, and the committee is selected according to certain notions of representation; yet, some of the decisions made by the committee fail to follow the opinion of the majority of the voters.

We start by showing that, in this scenario, there is no guarantee that the consensus committee represents the voters better than the disagreement committee. 
To see why, we construct an example where each voter $i$ is closer to the opinions of candidate $d_i$ than to those of the consensus candidates. 
We note that in many situations this is to be expected: e.g., 
the members of the disagreement committee could be the voters themselves, or, more broadly, for each voter there may be a `niche' candidate that closely matches their preferences as well as a slate of broadly popular candidates that she mostly agrees with.
Accordingly, in our example the decisions of the consensus committee exhibit less agreement with the opinions of the majority of the voters compared to those of the disagreement committee, and the disagreement committee is, in fact, more representative than the consensus committee.
The following example instantiates this idea. 

\begin{example}
\label{ex:diasagreement}
Consider a $5$-voter election with candidate set $\{c_1, \dots, c_5, d_1, \dots, d_5\}$, where the goal is to select a $5$-member committee that will
decide on 15 `yes/no' issues. Both voters and candidates have their own
preferences over issues. Specifically, for each $i=1, \dots, 5$ 
both voter $i$ and candidate $d_i$ have the same preferences over issues, as indicated in Table~\ref{tab:pref_ex_diss}.
In contrast, for each $i=1, \dots, 5$ candidate $c_i$ prefers `yes' on all issues.

\begin{table}[htb]
    \centering
\begin{tabular}{|r|l|l|l|l|l|} \hline
\multirow{2}{*}{Issue}     & \multicolumn{5}{c|}{Voter $i$/Candidate $d_i$} \\ \cline{2-6}
     & $1$ & $2$ & $3$ & $4$ & $5$ \\ \hline
$1$  & `no'  & `no'  & `no'  & `yes' & `yes' \\ \hline
$2$  & `no'  & `yes' & `yes' & `no'  & `no'  \\ \hline
$3$  & `no'  & `yes' & `yes' & `yes' & `yes' \\ \hline
$4$  & `yes' & `no'  & `yes' & `yes' & `yes' \\ \hline
$5$  & `yes' & `no'  & `yes' & `yes' & `yes' \\ \hline
$6$  & `yes' & `yes' & `no'  & `yes' & `yes' \\ \hline
$7$  & `yes' & `yes' & `no'  & `yes' & `yes' \\ \hline
$8$  & `yes' & `yes' & `yes' & `no'  & `yes' \\ \hline
$9$  & `yes' & `yes' & `yes' & `no'  & `yes' \\ \hline
$10$ & `yes' & `yes' & `yes' & `yes' & `no'  \\ \hline
$11$ & `yes' & `yes' & `yes' & `yes' & `no'  \\ \hline
$12$ & `yes' & `yes' & `yes' & `yes' & `yes' \\ \hline
$13$ & `yes' & `yes' & `yes' & `yes' & `yes' \\ \hline
$14$ & `yes' & `yes' & `yes' & `yes' & `yes' \\ \hline
$15$ & `yes' & `yes' & `yes' & `yes' & `yes' \\ \hline
\end{tabular}
    \caption{Preferences over issues of voters and dissagreement candidates in Example~\ref{ex:diasagreement}}
    \label{tab:pref_ex_diss}
\end{table}

If each voter approves all candidates that agree with them on at least 80\% of the issues, the approvals of the voters are as in Example~\ref{ex:consensus}:
each voter $i$ approves $d_i$ and all consensus candidates $c_1, \dots, c_5$.

Now, if the disagreement committee $W^d$ is elected, its decisions
coincide with those of the majority of the voters on each issue.
However, the consensus committee $W^c$ disagrees with the majority of the voters on issues $1$ and $2$. Thus, in this example, the disagreement committee
provides better representation than the consensus committee.
\end{example}

In fact, in the context of committees voting on multiple issues, 
a PR committee is guaranteed to make decisions that are perfectly consistent with majority preferences as long as all voters represented by a given candidate fully agree with that candidate on all issues. That is, under these assumptions the multiwinner variant of the Ostrogorski paradox is ruled out.
\begin{proposition}
  Consider a ballot profile
$\mathcal{A} = (A_1, \dots,$ $A_n)$ over a candidate set $C$, and a
target committee size $k$, $k \leq |C|$, such that $k$ divides $n$.
Consider also a set of candidates $W= \{w_1, \ldots, w_k\}$ that provides PR for $(\calA, k)$, and  
a partition of $N$
into $k$ pairwise disjoint subsets $N_1, \dots, N_k$ of size $\frac{n}{k}$ each
such that for each $\ell\in[k]$
all voters in $N_\ell$ approve $w_\ell$.
Suppose that the elected committee $W$ has to make decisions on a number of `yes/no' issues by a majority vote (breaking ties in favor of `no'), and that for each issue and for each committee member $w_\ell \in W$ the opinion of all voters in $N_\ell$ coincides with the opinion of $w_\ell$. Then, on each issue, the committee's decision agrees with the opinion of the majority of the voters.
\end{proposition}

\begin{proof}
    Fix an issue $x$. Let $W'\subseteq W$ be the subset of committee members
    that support $x$, and let $k'=|W'|$. Then $W\setminus W'$ is the subset of committee members that oppose $x$, and $|W\setminus W|=k-k'$.
    A voter $i$ supports $x$ if and only if $i\in N_\ell$ for some $w_\ell\in W'$.
    It follows that $x$ is supported by $\frac{n}{k}\cdot k'$ voters and opposed by $\frac{n}{k}\cdot (k-k')$ voters. It remains to observe that $k'>k-k'$ 
    if and only if $\frac{n}{k}\cdot k' > \frac{n}{k}\cdot (k-k')$.
    Therefore, the committee decisions will always agree with the majority of the voters.
\end{proof}

In contrast, if $k$ divides $n$, but the committee does not provide PR\footnote{In Section~\ref{sec:fpr}, we will extend the notion of perfect representation to elections in which $k$ does not divide $n$; our discussion here can be easily adapted to this extended notion.}, some voters would be overrepresented, while others would be underrepresented, and/or some voters would remain unrepresented. Therefore, the committee's opinion may fail to follow that of the voters, even if each candidate
agrees on all issues with all voters who approve her.

\begin{example}
  \label{ex:three-yes-no}
Let $C=\{c_1, \dots, c_5\}$, and consider a ballot profile 
$\calA=(A_1, \dots, A_9)$, where $A_i= \{c_1\}$ for $i=1, \dots, 3$,
$A_i= \{c_2\}$ for $i=4, \dots, 6$, $A_7= \{c_3\}$, $A_8= \{c_4\}$,
and $A_9= \{c_5\}$
(see Figure~\ref{fig:ex-three-yes-no}).
The target committee size is $k=3$. 

The winning committee has to decide on three `yes/no' issues. On each issue, 
the candidates follow the opinion of the voters that approve them. 
The opinions of the voters are as indicated in Table~\ref{tab:pref_ex_three-yes-no}.

\begin{table}[htb]
    \centering
\begin{tabular}{|r|l|l|l|l|l|} \hline
\multirow{2}{*}{Issue} & \multicolumn{5}{c|}{Voters} \\ \cline{2-6}
 & $\{1, 2, 3\}$ & $\{4, 5, 6\}$ & $\{7\}$ & $\{8\}$ & $\{9\}$ \\ \hline
$1$  & `yes'  & `no'  & `yes' & `yes' & `no'  \\ \hline
$2$  & `yes'  & `no'  & `yes' & `no'  & `yes' \\ \hline
$3$  & `yes'  & `no'  & `no'  & `yes' & `yes' \\ \hline
\end{tabular}
    \caption{Preferences over issues of voters in Example~\ref{ex:three-yes-no}}
    \label{tab:pref_ex_three-yes-no}
\end{table}

Note that on each issue the opinion of the majority of the voters is `yes'.
There are three committees that provide EJR, all of them composed of candidates
$c_1$ and $c_2$, plus one of $c_3$, $c_4$, or $c_5$. However, 
each of these committees votes `no' on one of the three issues.
\end{example}

\begin{figure}
  \centering
  \includegraphics[width=0.5\textwidth]{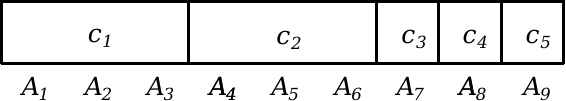}
  \caption{Ballot profile in Example~\ref{ex:three-yes-no}.}
  \label{fig:ex-three-yes-no}
\end{figure}

Now, in a real election, we cannot assume that the opinions of the voters represented by each candidate are unanimous on each issue, nor that each candidate will always follow the opinion of the voters that she represents. Therefore, we cannot guarantee that the decisions of the committee will always follow the opinion of the majority of the voters, even if we select a committee with desirable axiomatic properties. However, we believe that the core of our argument still holds: if a committee provides PR, we can assign voters to each candidate in such a way that each candidate represents the same number of voters and each voter approves her assigned candidate, whereas if the committee does not provide PR, voters will not be represented equitably. Therefore, if the only
available information is the approvals of the voters, we believe that, in the context of voting on issues, the best possible committees are those that provide PR (whenever such committees exist). As a consequence, in this setting we may have to discard EJR
due to its incompatibility with PR. Indeed, even if PR and EJR can be achieved simultaneously, there is no guarantee that a committee that provides both PR and EJR represents the voters better than a committee that only provides PR, as Example~\ref{ex:diasagreement} shows.

We note that in real (political) elections, electing a voting body that will vote on issues in a way that agrees with voters' majority preferences is an important desideratum, but not the only one: it is also desirable to ensure that
large cohesive groups are well represented. In such cases, 
one may need to achieve both PR and EJR, or, if this is not feasible, 
investigate tradeoffs between these criteria. One such tradeoff would be 
to consider a weakening of EJR that is compatible with PR; this is the route
that we will explore in Section~\ref{sec:pjr}.

\subsubsection{Participatory Budgeting} In participatory budgeting, 
candidates are projects that can be implemented in a given city, 
each voter approves some of the projects, and the goal is to select 
a subset of projects to implement given a budget constraint~\cite{rey2025pb}. 
The special case of this setting where all projects have the same cost
(say, $p$) is equivalent to multi-winner voting: 
if the budget $B$ satisfies $\lfloor\frac{B}{p}\rfloor=k$, 
then exactly $k$ projects
must be selected.

The crucial difference between selecting a committee to vote on issues 
and deciding on projects to implement is that, in the latter case, 
voters derive direct utility from each candidate
in the winning committee they approve, whereas in the former
case the representation quality offered by a committee is measured in terms of the outcomes of votes on issues. Thus, in the context of participatory budgeting it is natural to evaluate outcomes in terms of the individual satisfaction obtained by each voter. In particular, a voter who approves projects $c_1$, $c_2$, and $c_3$ clearly prefers an outcome $W$ with $W\cap\{c_1, c_2, c_3\}=\{c_1, c_2\}$
to an outcome $W'$ with $W'\cap\{c_1, c_2, c_3\}=\{c_1\}$ (it is less obvious
that this voter prefers $W''$ with $W''\cap\{c_1, c_2, c_3\}=\{c_2, c_3\}$ over $W'$, 
but if the only information available is the voters' approvals, this is perhaps a reasonable assumption).

Consequently, when we interpret Example~\ref{ex:consensus} in the context of participatory budgeting, we believe that the consensus committee should be preferred over the disagreement committee. Thus, arguably, in this setting PR is perhaps less important, while EJR and similar axioms should play a more prominent role.

\section{Proportional Justified Representation}\label{sec:pjr}
The EJR axiom provides the following guarantee: at least one member of an $\ell$-cohesive group
has at least $\ell$ representatives in the committee. This focus on a single group member
does not quite reflect our intuition of what it means for a group to be well-represented. 
A weaker and perhaps more natural condition 
is to require that collectively the members of an $\ell$-cohesive group
are allocated at least $\ell$ representatives. This idea is captured by the following definition.

\begin{definition}
{\bf Proportional justified representation (PJR)} Consider a ballot
profile $\mathcal{A} = (A_1, \dots, A_n)$ over a candidate set $C$, and a target committee size $k$, $k \leq |C|$. Given a positive integer $\ell\in [k]$, we say that a set of
candidates $W$, $|W| = k$, {\em provides $\ell$-proportional justified
representation ($\ell$-PJR)} for $(\mathcal{A}, k)$ if for
every $\ell$-cohesive set of voters $N^* \subseteq N$ 
it holds that $|W \cap (\bigcup_{i \in N^*} A_i)| \ge \ell$. We say that $W$ {\em provides proportional justified representation (PJR)} for $(\mathcal{A}, k)$
if it provides $\ell$-PJR for all $\ell\in [k]$.
An approval-based voting rule {\em satisfies proportional justified
representation (PJR)} if for every ballot profile $\mathcal{A}$ 
and every target committee size $k$ it outputs a committee that
provides PJR for $(\mathcal{A}, k)$.
\end{definition}

It is immediate that every committee that provides PJR also provides 
JR: the PJR condition for $\ell=1$ is exactly JR. 
Also, it is easy to see that every committee that provides EJR also 
provides PJR: the condition ``$|A_j \cap W| \geq \ell$ for some $j\in N^*$'' in the definition of EJR
implies the condition ``$|W \cap (\bigcup_{i \in N^*} A_i)| \ge \ell$'' in the definition of PJR.
To summarize, we obtain the following proposition.

\begin{proposition}\label{prop:jr-ejr-pjr}
EJR implies PJR, and PJR implies JR.
\end{proposition}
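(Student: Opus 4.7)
The plan is to verify both implications directly from the definitions, observing that the two containments fall out of straightforward set-theoretic manipulations.

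For the first implication (EJR implies PJR), I would fix a committee $W$ satisfying EJR and an arbitrary $\ell$-cohesive set $N^*\subseteq N$. By EJR, there is at least one voter $i\in N^*$ with $|A_i\cap W|\geq \ell$. Since $A_i \subseteq \bigcup_{j\in N^*} A_j$, we have $A_i\cap W \subseteq W\cap \bigcup_{j\in N^*} A_j$, and monotonicity of cardinality gives $|W\cap \bigcup_{j\in N^*} A_j|\geq |A_i\cap W|\geq \ell$, which is exactly the PJR condition for $N^*$. Because $\ell$ and $N^*$ were arbitrary, $W$ provides PJR.

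For the second implication (PJR implies JR), I would specialize the PJR condition to $\ell=1$. If $N^*$ is a $1$-cohesive set, then PJR yields $|W\cap \bigcup_{i\in N^*} A_i|\geq 1$, i.e., some candidate $c\in W$ lies in $A_i$ for at least one $i\in N^*$. That voter then satisfies $|A_i\cap W|\geq 1$, which is precisely the requirement that $W$ provide $1$-JR. Hence PJR implies JR.

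There is no real obstacle here: the only subtle point is that JR and PJR are phrased in superficially different ways (``some voter is represented'' versus ``the union contains a committee member''), and one has to notice that these two formulations coincide when $\ell=1$. Likewise, in the EJR-to-PJR direction one must remember to relax from a single voter's approval set to the union over the cohesive group, which weakens the statement in exactly the way PJR requires.
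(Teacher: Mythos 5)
Your proof is correct and follows essentially the same route as the paper: the EJR-to-PJR step uses the containment $A_i\cap W\subseteq W\cap\bigcup_{j\in N^*}A_j$ for the voter witnessing EJR, and the PJR-to-JR step is the observation that PJR at $\ell=1$ coincides with JR. Nothing is missing.
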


Moreover, unlike EJR, PJR is compatible with PR. In fact, an even stronger
statement is true: if a committee provides PR, it also provides PJR.

\begin{theorem}\label{thm:pjr-pr}
For every profile $\mathcal{A} = (A_1, \dots, A_n)$ and every target
committee size $k$, if a set of candidates $W$, $|W| = k$, provides
PR, then $W$ also provides PJR.
\end{theorem}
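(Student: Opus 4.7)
The plan is to use the PR partition directly and a simple pigeonhole argument on the parts. Suppose $W$ provides PR, witnessed by a partition $N = N_1 \cup \cdots \cup N_k$ with $|N_\ell| = n/k$ and a bijection assigning to each $N_\ell$ a distinct $w_\ell \in W$ such that every voter in $N_\ell$ approves $w_\ell$. Fix $\ell \in [k]$ and an $\ell$-cohesive set $N^* \subseteq N$; I need to show $|W \cap (\bigcup_{i \in N^*} A_i)| \ge \ell$.

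The key step is to count how many parts $N_j$ are hit by $N^*$. If $N^*$ meets only $t$ of the parts, then $|N^*| \le t \cdot n/k$; combined with $|N^*| \ge \ell \cdot n/k$ (from $\ell$-cohesiveness) this forces $t \ge \ell$. So there are indices $j_1, \dots, j_\ell$ and voters $i_s \in N^* \cap N_{j_s}$ for $s \in [\ell]$. Each such voter approves the candidate $w_{j_s}$ assigned to $N_{j_s}$, so $w_{j_s} \in A_{i_s} \subseteq \bigcup_{i \in N^*} A_i$, and since the $w_{j_s}$ are distinct we obtain $\ell$ distinct members of $W \cap (\bigcup_{i \in N^*} A_i)$, as desired.

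There is no real obstacle here: the intersection-size part of $\ell$-cohesiveness is not even needed — the cardinality bound $|N^*| \ge \ell n/k$ together with the PR partition into equal parts of size $n/k$ already forces enough parts, and thus enough assigned committee members, to intersect the approval sets of $N^*$. The proof is essentially a one-line pigeonhole once the PR witness is in hand.
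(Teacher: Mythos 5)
Your proof is correct and follows essentially the same route as the paper's: use the PR partition into $k$ parts of size $n/k$, apply the pigeonhole principle to conclude that $N^*$ meets at least $\ell$ parts, and collect the $\ell$ distinct assigned candidates. Your added observation that only the cardinality condition of $\ell$-cohesiveness is used is accurate and is also implicit in the paper's argument.
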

\begin{proof}
Let $W=\{w_1, \dots, w_k\}$.
The assumption that $W$ provides PR implies that $k$ divides $n$.
As $W$ provides PR, 
there exist $k$ pairwise disjoint subsets $N_1, \dots, N_k$ of size $\frac{n}{k}$ 
each such that for each $t\in[k]$ it holds that all voters in $N_t$ approve $w_t$.
Consider a set of agents $N^*\subseteq N$ and a positive integer $\ell$ 
such that $|N^*| \geq \ell \cdot \frac{n}{k}$. By the pigeonhole principle, 
$N^*$ has a non-empty intersection with at least $\ell$ of the sets
$N_1, \dots, N_k$. As each voter in $N^*\cap N_t$ approves $w_t$,
it follows that the number of candidates in $W$ that receive
approvals from voters in $N^*$ is at least $\ell$. 
\end{proof}

At first glance, it may appear that a committee that provides PJR can leave many members of a large $\ell$-cohesive group unrepresented: indeed, the definition of PJR
will be satisfied as long as $\ell$ voters approve $\ell$ distinct candidates in the winning
committee. However, our next result shows that, as long as a committee provides JR, 
the number of unrepresented voters cannot exceed $\lceil \frac{n}{k} \rceil - 1$
(and this bound is tight even if the committee in question provides EJR).

\begin{proposition}\label{prop:jr-tight}
  Consider a ballot profile $\mathcal{A} = (A_1, \dots, A_n)$ over a candidate set $C$, and a target committee size $k$, $k \leq |C|$. Fix a size-$k$ committee $W$ that provides JR for $(\mathcal{A},k)$ and a $1$-cohesive group $N^*\subseteq N$. 
  Then there are at most $\lceil \frac{n}{k} \rceil - 1$ voters in $N^*$
  who are not represented by $W$.
\end{proposition}

\begin{proof}
  Let $N'\subseteq N^*$ be the subset of voters in $N^*$ who are not represented
  by $W$. Since $N^*$ is a $1$-cohesive group, we have $\cap_{i\in N^*}A_i\neq\varnothing$ and hence $\cap_{i\in N'}A_i\neq\varnothing$. Suppose for the sake of contradiction that $|N'|> \lceil\frac{n}{k}\rceil-1$. This implies that $|N'|\ge \frac{n}{k}$, i.e., the voters in $N'$ form a $1$-cohesive group themselves. Hence
  the JR axiom requires that at least one of them approves at least one candidate
  in $W$, a contradiction with the choice of~$N'$. 
\end{proof}

\begin{example}
Let $k$ and $p$ be positive integers with $k, p\ge 2$.
Consider the set of candidates $C= A \cup B \cup C'$, where $A$, $B$ and $C'$
are pairwise disjoint, $|A|=|B|=k-1$, and $C'=\{c^*\}$ is a singleton. Consider also a set of voters $N= N_1 \cup N_2 \cup N_3$ with a ballot profile $(A_i)_{i\in N}$, where 
\begin{itemize}
\item
$|N_1|=p - 1$ and $A_i=A$ for each $i\in N_1$, 
\item
$|N_2|=(k-2)p + 1$ and $A_i = A \cup B$ for each $i\in N_2$, and 
\item
$|N_3|=p$ and $A_i=\{c^*\}$ for each $i\in N_3$. 
\end{itemize}
For this election 
we have 
$$
\frac{n}{k} = \frac{|N_1|+|N_2|+|N_3|}{k}=  \frac{kp}{k}= p, 
\text{ and hence }
\left\lceil \frac{n}{k} \right\rceil = p = |N_1|+1.
$$
Observe that the committee $W= B \cup \{c^*\}$ provides EJR for this election and target committee size $k$. Indeed, consider an $\ell$-cohesive group $N^*$ for some $\ell\in [k]$.
Let $z$ be some candidate approved by all voters in $N^*$. If $z=c^*$ then $N^*\subseteq N_3$ and hence $\ell=1$; then $N^*$ is represented in $W$ by $c^*$. Otherwise 
we have $z\in A\cup B$ and hence $N^*\subseteq N_1\cup N_2$; note that this implies
that $|N^*|\le |N_1|+|N_2|=(k-1)p$ and hence $\ell\le k-1$.
Observe also that $|N^*|\ge \frac{n}{k}=p$ and hence $N^*\cap N_2\neq\varnothing$.
As every voter in $N^*\cap N_2$ approves all candidates in $B$, $|B|=k-1\ge \ell$ 
and $B\subseteq W$, the EJR condition is satisfied in this case as well.
 
However, the $p-1 = \lceil \frac{n}{k} \rceil - 1$ voters 
in $N_1$ remain unrepresented in $W$, so the bound in Proposition~\ref{prop:jr-tight}
is tight.
\end{example}

We will now argue that a committee that provides PJR can be computed in polynomial time
as long as the target committee size $k$ divides the number of voters $n$. 
Specifically, we will show that under this condition, both the Monroe rule and Greedy Monroe (the latter of which is polynomial-time computable) satisfy PJR. We note that PAV satisfies 
EJR and hence PJR even if $k$ does not divide $n$; however, computing the output of PAV is NP-hard.

\begin{theorem}\label{thm:mon}
Consider a ballot profile $\calA = (A_1,\dots, A_n)$. If the target committee size $k$
divides $n$ then the outputs of Monroe and Greedy Monroe
on $(\calA, k)$ satisfy PJR.
\end{theorem}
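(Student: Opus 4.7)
The plan is to handle the two rules separately, both by contradiction. In each case assume there exists an $\ell$-cohesive group $N^*$ with a common approval set $T\subseteq\bigcap_{i\in N^*}A_i$ of size $\ell$, but with $|W\cap\bigcup_{i\in N^*}A_i|=q<\ell$. Write $S=W\cap\bigcup_{i\in N^*}A_i$, so $|S|=q$, and note that $T\cap W\subseteq S$, hence $|T\setminus W|\ge\ell-q\ge1$, while no candidate of $T\setminus W$ is elected. Since $k\mid n$, a valid Monroe mapping assigns exactly $n/k\ge 1$ voters to each committee member.

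For Monroe, let $\pi$ be an optimal valid mapping for $W$. At most $qn/k$ voters of $N^*$ are mapped to $S$, so the set $N^*_{\bar S}$ of $N^*$-voters sent by $\pi$ into $W\setminus S$ has size at least $(\ell-q)n/k\ge 1$. Pick any $i_0\in N^*_{\bar S}$ and let $b_0=\pi(i_0)\in W\setminus S$. The plan is to exhibit a committee $W'=(W\setminus B)\cup T^*$, where $b_0\in B\subseteq W\setminus S$ with $|B|=\ell-q$ and $T^*\subseteq T\setminus W$ with $|T^*|=\ell-q$, together with a valid mapping $\pi'$ of strictly greater Monroe score than $\pi$, contradicting the optimality of $W$. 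To build $\pi'$, I would choose $Y\subseteq N^*_{\bar S}$ of size $(\ell-q)n/k$ with $i_0\in Y$, route the voters of $Y$ uniformly onto $T^*$ (so each $t\in T^*$ receives $n/k$ voters from $Y$), use the leftover $Z=\pi^{-1}(B)\setminus Y$ to fill the $|Z|=|Y\setminus\pi^{-1}(B)|$ slots vacated at $W\setminus B$, and otherwise leave $\pi$ unchanged. Every voter in $Y$ moves from a candidate in $W\setminus S$ (giving score $0$, because members of $W\setminus S$ are disapproved by all of $N^*$) to a candidate in $T$ (giving score $1$), so $Y$ contributes $+|Y|$; every voter in $Z$ can lose at most $1$, contributing at least $-|Z|$. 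Hence the net change is at least $|Y|-|Z|=|Y\cap\pi^{-1}(B)|\ge 1$, because $i_0$ lies in both $Y$ and $\pi^{-1}(B)$. This produces the desired contradiction.

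For Greedy Monroe, I would track the set $U_t$ of voters still unsatisfied at the beginning of round $t$. The key observation is that whenever $|N^*\cap U_t|\ge n/k$, Greedy could pick some $c\in T\setminus W$ (available in round $t$, as no candidate of $T\setminus W$ is ever elected) together with any $n/k$ voters from $N^*\cap U_t$, all of whom approve $c$, scoring the maximum possible value $n/k$ for that round. Hence Greedy's actual round score is also $n/k$ and all voters of $N_t$ approve $c_t$. Consequently, if $c_t\in W\setminus S$ then $N_t\cap N^*=\emptyset$, so no $N^*$-voter is removed from $U_t$ that round; and if $c_t\in S$ then at most $n/k$ voters of $N^*$ leave $U_t$. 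Let $t^*$ be the first round with $|N^*\cap U_{t^*}|<n/k$; then more than $(\ell-1)n/k$ voters of $N^*$ have been satisfied before round $t^*$, and by the observation all of them were satisfied during rounds with $c_t\in S$. Since there are only $|S|=q$ such rounds, we obtain $qn/k>(\ell-1)n/k$, i.e.\ $q\ge\ell$, contradicting $q<\ell$.

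The main obstacle is the Monroe argument: one must verify that $\pi'$ is indeed a valid mapping (each candidate of $W'$ ends up with exactly $n/k$ voters and no voter is used twice) and, crucially, that the pessimistic $-1$ bound on the score change for voters in $Z$ is overcome by the gain on $Y$. The identity $|Y|-|Z|=|Y\cap\pi^{-1}(B)|$, combined with the deliberate inclusion of the seed voter $i_0$ both in $Y$ and in $\pi^{-1}(B)$, is exactly what forces the final inequality to be strict and yields the contradiction.
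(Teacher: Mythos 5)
Your proof is correct for both rules, and it follows the same two high-level strategies as the paper --- an exchange argument for Monroe and a ``greedy would have preferred the common candidate'' argument for Greedy Monroe --- but the implementations differ in both cases. For Monroe, the paper performs a \emph{single} swap: it finds one ``bad'' part $N_j$ of the optimal assignment that meets $N^*$ but none of whose $N^*$-voters approve $c_j$, replaces $c_j$ by one commonly approved unelected candidate, and reroutes $s=n/k$ voters of $N^*$ that were assigned to disapproved candidates; the strict gain comes from comparing at most $s-1$ approvals before with at least $s$ after on the affected voters. Your multi-swap of $\ell-q$ candidates is valid --- the accounting $|Y|-|Z|=|Y\cap\pi^{-1}(B)|\ge 1$, forced by planting the seed voter $i_0$ in both $Y$ and $\pi^{-1}(B)$, does yield a strict improvement, and the bijection between $Z$ and the vacated slots keeps the mapping valid --- but it is heavier than necessary, since one swap already suffices. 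For Greedy Monroe, the paper applies the pigeonhole principle to pick one witness voter from each of the first $\ell$ parts meeting $N^*$ and derives a contradiction at the first round whose witness disapproves the selected candidate; your round-by-round invariant (while at least $n/k$ unsatisfied $N^*$-voters remain, every round must score the maximum $n/k$, so only rounds electing candidates of $S$ can remove $N^*$-voters) combined with the count $qn/k>(\ell-1)n/k$ is a clean and slightly more global equivalent. Both of your arguments use the integrality of $n/k$ in exactly the places where the hypothesis $k\mid n$ is needed, matching the paper's remark that the proofs break down otherwise.
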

\begin{proof}

Let $s=\frac{n}{k}$; note that $s\in{\mathbb N}$.  
Suppose that the Monroe rule outputs a committee $W$; assume without
loss of generality that $W = \{c_1, \dots, c_k\}$. Let $\pi$ be 
a valid mapping $N\to W$ whose Monroe score is equal to the Monroe score of $W$.
Let $N_j=\pi^{-1}(c_j)$ for $j=1, \dots, k$.

Suppose for the sake of contradiction that there exists an $\ell>0$
such that $\ell$-PJR is violated 
for some $\ell$-cohesive group of voters $N^*$. Note that this means 
that at least one of the $\ell$ candidates jointly approved by 
voters in $N^*$ does not appear in $W$; let $c$ be some such 
candidate.

Let us say that a set $N_j$ is {\em good} if at least one voter in 
$N_j\cap N^*$ approves $c_j$, and {\em bad} otherwise. Note that if there 
are at least $\ell$ good sets among $N_1,\dots, N_k$ then there are 
$\ell$ distinct candidates in $W$ each of which is approved by at least one voter 
in $N^*$, a contradiction with our choice of $N^*$. 

Thus, there are at most 
$\ell-1$ good sets, so at most $(\ell-1)\cdot s$ voters 
from $N^*$ appear in good sets. As $|N^*|\ge \ell\cdot s$, 
there are at least $s$ voters in $N^*$ who appear in bad sets.
These voters are assigned by $\pi$ to candidates they do not approve; 
let $N'\subseteq N^*$ be a set that consists of $s$ such voters. 

Let $N_j$ be some bad set with $N_j\cap N'\neq\varnothing$ and
let $W'=(W\setminus\{c_j\})\cup\{c\}$.
We now construct a valid mapping $\pi': N\to W'$ as follows. 
We let $\pi'(i)=\pi(i)$ for all $i\in N\setminus(N'\cup N_j)$.
Then, we set $\pi'(i)=c$ for each $i\in N'$. 
At this point, the voters in $N_j\setminus N'$ are not yet assigned
by $\pi'$, and, on the other hand, 
each candidate $c_\ell\in W\setminus \{c_j\}$ only receives 
$|N_\ell\setminus N'|$ votes, i.e., they `miss' $|N_\ell\cap N'|$ votes;
we need to define $\pi'$ on $N_j\setminus N'$ to `compensate' these candidates.
Note that $N'=\cup_{\ell\in [k]}(N_\ell\cap N')$ and hence 
the total number of `missing' votes is
$$
\sum_{c_\ell\in W\setminus\{c_j\}}|N_\ell\cap N'|=|N'|-|N_j\cap N'|=
|N'\setminus N_j|.
$$
Since $|N'|=|N_j|=s$, we have 
$|N'\setminus N_j|=|N_j\setminus N'|$, i.e., the total number of `missing'
votes is exactly equal to the number of unassigned voters.
We can therefore define $\pi'$ to reassign the voters in $N_j\setminus N'$
to candidates in $W\setminus \{c_j\}$ so that 
each $c_\ell\neq c_j$ gets $|N_\ell\setminus N'|$ additional votes, 
and hence $|N_\ell|$ votes in total; this ensures that $\pi'$ is a valid mapping.

To obtain a contradiction, 
it remains to observe that the score of $\pi'$ is higher than the score
of $\pi$: indeed, by our choice of $N_j$,  
at most $s-1$ voters in $N'\cup N_j$ approve their assigned candidate under $\pi$,
whereas all $s$ voters in $N'$ approve their assigned candidate under $\pi'$,
and all voters in $N\setminus(N'\cup N_j)$ are indifferent between $\pi$ and $\pi'$.

Next, we consider Greedy Monroe. Suppose again for the sake of
contradiction that Greedy Monroe outputs a committee $W$ that fails $\ell$-PJR for some
$\ell\in[k]$ and some $\ell$-cohesive set of voters $N^*$; we can
assume that $|N^*|=\ell\cdot s$.  Consider a candidate $c\in
C\setminus W$ that is approved by all voters in $N^*$.

By the pigeonhole principle $N^*$ has a non-empty 
intersection with at least $\ell$ of the sets $N_1, \dots, N_k$
constructed by Greedy Monroe 
(the integrality of $s$ is crucial here); let the first $\ell$ of 
these sets be $N_{i_1},\dots, N_{i_\ell}$ with $i_1 < \dots < i_\ell$. 
For each $t=1,\dots, \ell$, pick a voter $v_t$ in $N_{i_t}\cap N^*$; 
note that all these voters are assigned to different candidates in $W$. 
It cannot be the case that each of these $\ell$ voters approves the 
candidate she is assigned to: otherwise, $W$ would contain
$\ell$ distinct candidates each of which is approved 
by some voter in $N^*$, a contradiction with our choice of $N^*$.
Let 
$j = \min\{i_t:\text{some voter in $N_{i_t}\cap N^*$ does not approve }c_{i_t}\}$;
the argument above shows that $j$ is well-defined and $j\le i_\ell$.
By our choice of $j$, not all voters in $N_j$ approve $c_j$, 
yet the pair $(N_j, c_j)$ was chosen at step $j$. Let $N'$
be the set of unsatisfied voters at the start of step $j$.
Since $j\le i_\ell$, the sequence $N_1,\dots, N_{j-1}$ contains at most 
$\ell-1$ sets that have a non-empty intersection with $N^*$.
As the size of each of these sets is $s$ and $|N^*|=\ell\cdot s$, 
at step $j$ at least $s$ voters from $N^*$ are present in $N'$. 
Thus, candidate $c$, together with $s$ voters from $N^*\cap N'$, 
would be a better choice for Greedy Monroe than $(N_j, c_j)$, a contradiction. 
\end{proof} 

We remark that if $\frac{n}{k}$ is not an integer, the proof of Theorem~\ref{thm:mon}
breaks down, because $N^*$ may be covered by fewer than $\ell$ sets among $N_1, \dots,N_k$. 
The following example shows that both Monroe and Greedy Monroe may fail PJR in this case.

\begin{example}\label{ex:monroe}
Let $n=10$, $k=7$, $C=\{c_1, \dots, c_8\}$.
Suppose that $A_i=\{c_i\}$ for $i=1, \dots, 4$ and
$A_i=\{c_5, c_6, c_7, c_8\}$ for $i=5, \dots, 10$
(see Figure~\ref{fig:ex-monroe}).
Let $\ell = 4$. Then $\ell\cdot\frac{n}{k} =\frac{40}{7} < 6$,
so the set of voters $\{5, 6, 7, 8, 9, 10\}$
``deserves'' four representatives.
However, under both Monroe and Greedy Monroe 
only three candidates from 
$\{c_5, c_6, c_7, c_8\}$
will be selected. 
\end{example}

\begin{figure}
  \centering
  \includegraphics[width=0.5\textwidth]{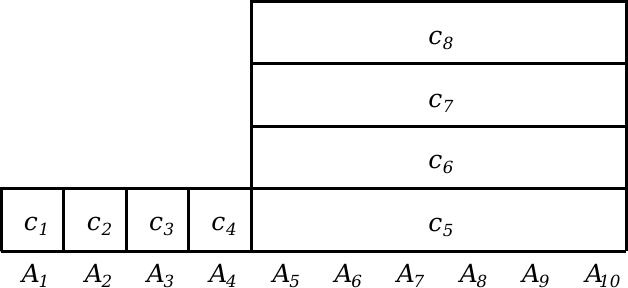}
  \caption{Ballot profile in Example~\ref{ex:monroe}.}
  \label{fig:ex-monroe}
\end{figure}

It is then natural to ask if there is a polynomial-time computable voting rule
that satisfies PJR for all values of $n$ and $k$.
Interestingly, it turns out that the answer to this question is `yes':
in simultaneous and independent works,
\citet{brill:phragmen} (the conference paper version of this paper was published in 2017) and
\citeauthor{sanchez2021maximin}\footnote{A preprint of this
work~\cite{2016arXiv160905370Sv1}, whose set of authors does not overlap that of~\cite{brill:phragmen}, was available in
2016.}~\cite{sanchez2021maximin} identify three different rules
that satisfy PJR and can be computed in polynomial time.
Specifically, \citet{brill:phragmen} show this for 
Phragm\'en's Sequential Rule~\cite{phragmen:p1,phragmen:p2,phragmen:pOpt,phragmen:p4}
and for the voting rule of 
Enestr\"om and Phragm\'en~\cite{SEL17a,camps2019method},
while \citet{sanchez2021maximin} prove this for a novel rule, which they call the Maximin Support Method.

To conclude this section, we establish that PJR inherits a prominent feature of EJR: 
it characterizes PAV within the class of $\vecw$-Thiele rules.

\begin{proposition}\label{prop:wpav-pjr}
The rule $\vecw$-Thiele satisfies PJR if and only if
$\vecw = (1, \frac{1}{2}, \frac{1}{3}, \dots)$.
\end{proposition}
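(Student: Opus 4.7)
The ``if'' direction is immediate: by \citeauthor{aziz:aaai15}~\shortcite{aziz:aaai15}, PAV satisfies EJR, so Proposition~\ref{prop:jr-ejr-pjr} yields PJR.

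For the ``only if'' direction, fix a non-increasing $\vecw\neq(1,\tfrac12,\tfrac13,\ldots)$ with $w_1=1$, and let $j\geq 2$ be the smallest index with $w_j\neq 1/j$ (so $w_i=1/i$ for $i<j$). My plan is to construct, for carefully chosen $n$ and $k$, a ballot profile on which $\vecw$-PAV outputs a committee that fails PJR. The key observation that lets me borrow from the existing EJR literature is this: when a cohesive group $N^*$ is \emph{monolithic}, meaning every voter in $N^*$ submits the same ballot $S$, then $\bigcup_{i\in N^*}A_i=S$, so the PJR violation $|W\cap\bigcup_{i\in N^*}A_i|<\ell$ collapses to $|W\cap S|<\ell$, which is exactly the EJR violation for $N^*$. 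Hence it suffices to adapt \citeauthor{aziz:aaai15}'s EJR-failure constructions to produce a monolithic offending group.

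For the clean sub-case $w_j<1/j$, I would take $N^*$ to consist of $jn/k$ voters all approving exactly $\{c_1,\ldots,c_j\}$, and partition the remaining voters into $k-j$ ``filler'' groups of $n/k$ voters each, with each filler group unanimously approving a single distinct candidate $d_\alpha$. By the symmetry of both halves of the profile, every $\vecw$-PAV-optimal committee is determined (up to relabeling) by the number $p\leq j$ of candidates it takes from $\{c_1,\ldots,c_j\}$; its score equals $(jn/k)(w_1+\cdots+w_p)+(n/k)(k-p)$, and the marginal gain from $p$ to $p+1$ equals $(n/k)(jw_{p+1}-1)$. Using $w_i=1/i$ for $i<j$, this marginal is strictly positive for $p+1<j$ and, because $w_j<1/j$, strictly negative at $p+1=j$. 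Thus $p=j-1$ is the unique maximizer, leaving $|W\cap\{c_1,\ldots,c_j\}|=j-1<j$ and violating PJR for the monolithic $j$-cohesive group $N^*$. Picking $n$, $k$ large enough with $n/k$ integral makes all group sizes integer.

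The main obstacle is the complementary sub-case $w_j>1/j$, which includes boundary sequences such as $\vecw=(1,\tfrac12,\tfrac12,\ldots)$ where marginals can collapse to equality. Under-allocation to $N^*$ no longer follows from the simple symmetric profile above (since the rule would now pick $p\geq j$). My plan is instead to exploit the over-valuation by enlarging the common approval set of $N^*$ to $\{c_1,\ldots,c_L\}$ for a suitable $L>j$ (chosen so that the marginal $j w_i-1$ remains non-negative for all $i\leq L$) while keeping the $k-j$ filler groups of size $n/k$ intact; then $\vecw$-PAV is forced to spend $L$ slots on the $c_i$'s, leaving fewer than $k-j$ slots for the fillers and excluding at least one 1-cohesive filler group from $W$, which violates PJR. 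Selecting $L$ so that this over-allocation argument works for every $\vecw$ in this sub-case, and resolving the ties that arise when $w_j=1/(j-1)$ by a single-voter perturbation of one group's size, is the most delicate step; the routine integrality and feasibility bookkeeping (ensuring $n/k$ is integral, the candidate pool suffices, and the $\vecw$-PAV optimum is unique under any tie-breaking) would be deferred to the appendix.
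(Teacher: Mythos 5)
Your ``if'' direction and your key observation for the converse---that a cohesive group whose members all submit the \emph{same} ballot turns an EJR violation into a PJR violation, because then $\bigcup_{i\in N^*}A_i$ is the common ballot---are exactly right; the latter is precisely how the paper converts the known EJR-failure construction of \citeauthor{aziz:aaai15}~\shortcite{aziz:aaai15,aziz:scw} (their Lemma~2) into a PJR failure when $w_j<\tfrac{1}{j}$. However, both of your explicit constructions have problems. In the sub-case $w_j<\tfrac{1}{j}$ you use $j$ candidates $c_1,\dots,c_j$ plus $k-j$ filler candidates, i.e., exactly $k$ candidates in total, so the only committee of size $k$ is the entire candidate set and $p=j$ is forced; your score formula $(jn/k)(w_1+\cdots+w_p)+(n/k)(k-p)$ presupposes that $k-p$ distinct approved fillers are available, which fails precisely at the claimed optimum $p=j-1$ since $k-(j-1)>k-j$. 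This is repairable (take $k-j+1$ fillers with $\frac{(k-j)n}{k(k-j+1)}$ dedicated voters each, and use $jw_j<1$ together with a large $k$ to ensure a filler still beats $c_j$), but as written the optimization over $p$ is vacuous.

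The sub-case $w_j>\tfrac{1}{j}$ contains the real gap. Your over-allocation plan needs the marginal $(jn/k)\,w_i$ of an $i$-th common candidate with $i>j$ to beat a filler's $n/k$, i.e., $w_i\geq\tfrac{1}{j}$ for some $i>j$; nothing guarantees this. For instance, $\vecw=(1,1,\tfrac{1}{3},\tfrac{1}{4},\dots)$ has $j=2$ and $w_2>\tfrac{1}{2}$, yet $w_3=\tfrac{1}{3}<\tfrac{1}{2}$, so on your profile $\vecw$-PAV elects exactly $c_1,c_2$ together with all $k-2$ fillers and PJR is satisfied; no choice of $L$ helps, and this vector never falls into your other sub-case because $w_i\geq\tfrac{1}{i}$ for every $i$. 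The paper avoids this case entirely: since PJR implies JR (Proposition~\ref{prop:jr-ejr-pjr}), it suffices to recall that $w_j>\tfrac{1}{j}$ already makes $\vecw$-PAV fail JR (Lemma~1 of \citeauthor{aziz:scw}~\shortcite{aziz:scw}), via a different construction in which several groups of already well-represented voters each ``defend'' $j$ committee seats against a $1$-cohesive group. You should either invoke that result through the PJR$\Rightarrow$JR implication or reproduce that construction; the enlarged-approval-set idea does not cover this case.
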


\begin{proof}
If $\vecw=(1, \frac{1}{2}, \frac{1}{3}, \dots)$ 
then $\vecw$-Thiele satisfies EJR \cite{aziz:scw}
and hence it also satisfies PJR.

Our proof that $\vecw$-Thiele fails PJR if $\vecw\neq(1, \frac{1}{2}, \frac{1}{3}, \dots)$
is very similar to the respective proof for EJR
by \citeauthor{aziz:scw}~\cite{aziz:scw}. 

First, \citeauthor{aziz:scw} show (Lemma~1) 
that if $w_j > \frac{1}{j}$ for some $j > 1$ then $\vecw$-Thiele fails JR. Therefore, in such cases 
$\vecw$-Thiele fails PJR as well.

Second, if 
$w_j < \frac{1}{j}$ for some $j > 1$,
\citeauthor{aziz:scw} construct (Lemma~2)
an election in which there exists a subset of candidates
$C'$ with $|C'|= j$ and a set of $j\cdot\frac{n}{k}$ voters
who approve candidates in $C'$ only 
such that $\vecw$-Thiele only elects $j-1$ candidates from $C'$. 
This establishes that $\vecw$-Thiele fails PJR
in this case.
\end{proof}

In contrast, all Seq-$\vecw$-Thiele rules fail PJR.
\begin{proposition}\label{prop:wrav-pjr}
The rule Seq-$\vecw$-Thiele fails PJR for each weight vector $\vecw$.
\end{proposition}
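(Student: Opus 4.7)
The plan is to construct, for each non-increasing weight vector $\vecw = (1, w_2, w_3, \dots)$, a ballot profile on which $\vecw$-RAV outputs a committee violating PJR. I would split into two cases according to how fast $\vecw$ decays.

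\emph{Case 1: there is an integer $k \geq 3$ with $(k-1) w_k > 1$.} I would fix such a $k$ and consider the profile on candidate set $\{a, b_1, \dots, b_k\}$ consisting of $n_A$ voters with ballot $\{a\}$ and $n_B$ voters with ballot $\{b_1, \dots, b_k\}$, choosing positive integers $n_A, n_B$ so that $\frac{n_A}{w_k} < n_B \leq (k-1) n_A$. This is feasible exactly because $(k-1) w_k > 1$. The right inequality gives $n_A \geq \frac{n_A + n_B}{k}$, so the first group is $1$-cohesive and PJR demands $a \in W$ (by Proposition~\ref{prop:jr-ejr-pjr}, since PJR implies JR). An easy induction on the round index shows that at every round $j \in [k]$ each unselected $b_i$ has approval weight $n_B w_j \geq n_B w_k > n_A$, while $a$ has weight $n_A$; hence $\vecw$-RAV selects $b_1, \dots, b_k$ in some order and never $a$, producing the desired PJR violation.

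\emph{Case 2: $w_k \leq \frac{1}{k-1}$ for every $k \geq 2$.} Here the Case-1 JR failure is unavailable, so I would aim directly at a $2$-PJR failure. Take a large committee size $k$ and consider a profile with a group $N^*$ of $s$ voters sharing the ballot $\{c_1, c_2\}$ together with $d$ ``filler'' voters, each approving a distinct singleton candidate $d_1, \dots, d_d$. I would select $s$ and $d$ so that $s \geq \frac{2(s+d)}{k}$ (so $N^*$ is $2$-cohesive), $d \geq k - 1$ (enough fillers to fill the remaining $k-1$ slots), and $s \cdot w_2 < 1$ (so that after $c_1$ is picked in round $1$, the weight of $c_2$ falls below that of any unused filler). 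Under these conditions, $\vecw$-RAV picks $c_1$ in round $1$ and a distinct $d_j$ in each of rounds $2, \dots, k$, giving $|W \cap \{c_1, c_2\}| = 1 < 2$ and violating $2$-PJR for $N^*$.

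The main obstacle is ensuring that integer parameters $(s, d, k)$ satisfying all three Case-2 constraints exist for every admissible $\vecw$. When $w_2 < \tfrac{1}{2}$, one can take $s = 2$ with $k$ sufficiently large. For $w_2 \in [\tfrac12, 1)$, I would strengthen the construction by replicating each filler across $m$ voters so that the filler weight becomes $m$ and the binding inequality reads $s w_2 < m < s$, which is achievable for $w_2 < 1$ when $k$ is large (the cohesiveness count $2n/k$ shrinks). For the degenerate boundary $w_2 = 1$, $\vecw$-RAV coincides with plain approval voting in its first two rounds, and a standard small approval-voting JR counterexample (blown up to eliminate tie-breaking dependence) completes the argument. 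Once the parameters are in place in each sub-case, the verification of the RAV dynamics round by round is routine.
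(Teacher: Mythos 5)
Your Case 1 is fine: for any $\vecw$ with $(k-1)w_k>1$ for some $k\ge 3$, the two-block profile does force $\vecw$-RAV to spend all $k$ seats on $\{b_1,\dots,b_k\}$ while the $1$-cohesive group $\{a\}$-voters goes unrepresented, so JR (hence PJR) fails. The problem is Case 2, and it is a genuine gap rather than a fixable detail. In your construction the $2$-cohesiveness requirement $s\ge \frac{2n}{k}$ together with the need for at least $k-1$ disjoint filler blocks of weight $m$ each gives $s(k-2)\ge 2(k-1)m$, i.e.\ $\frac{m}{s}\le\frac{k-2}{2(k-1)}<\frac12$; since you also need $m> s\,w_2$ for the fillers to beat $c_2$ in rounds $2,\dots,k$, you are forced to have $w_2<\frac{m}{s}<\frac12$. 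So the basic version ($m=1$) already requires $s\ge 3$ and $w_2<\frac13$ (your claim that $s=2$ works is arithmetically impossible: $s\ge\frac{2(s+d)}{k}$ with $d\ge k-1$ gives $s\ge 2+\frac{2}{k-2}$), and the replication trick does not escape the $w_2<\frac12$ barrier, because $n$ grows linearly with $k$ through the $k-1$ filler blocks, so $\frac{2n}{k}$ does not shrink below $2m$. Consequently every $\vecw$ with $w_2\ge\frac12$ and $(k-1)w_k\le 1$ for all $k\ge 3$ is left uncovered — and this region contains the harmonic vector $(1,\frac12,\frac13,\dots)$ itself, for which the paper needed the entire LP machinery of Theorem~\ref{thm:rav} merely to exhibit a JR failure at $k=6$. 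The $w_2=1$ sub-case is also not genuinely handled: after round $2$ the rule need not behave like approval voting (e.g.\ $\vecw=(1,1,0,\dots)$), so a ``standard AV counterexample'' does not transfer.

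The paper sidesteps all of this by citing the result of \citeauthor{aziz:aaai15}~\shortcite{aziz:aaai15,aziz:scw} that $\vecw$-RAV already fails JR for \emph{every} $\vecw\neq(1,0,0,\dots)$; only the single vector $(1,0,\dots)$ then needs a bespoke argument, and there a six-voter profile with a $2$-cohesive group approving $\{c_1,c_2\}$ does the job. If you want a self-contained proof, you would essentially have to reprove that JR failure for general $\vecw$ with $w_2>0$, which is a substantially harder construction than the ones you sketch; otherwise, the honest route is to invoke the known JR result for all $\vecw$ with $w_2>0$ and supply an explicit $2$-PJR counterexample only for $\vecw=(1,0,0,\dots)$.
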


\begin{proof}
\citeauthor{aziz:scw}~\cite{aziz:scw} show that
if $\vecw\neq(1, 0, \dots)$ then Seq-$\vecw$-Thiele fails JR;
it follows that it also fails PJR.

To show that Seq-$\vecw$-Thiele fails PJR for $\vecw= (1, 0, \dots, 0)$,
consider the following election.
Let $n=6$, $k= 3$, $C= \{c_1, c_2, c_3, c_4\}$. 
Set $A_1=\dots=A_4= \{c_1, c_2\}$, $A_5=\{c_3\}$, $A_6=\{c_4\}$. 
Seq-$(1, 0, \dots)$-Thiele outputs a committee that contains $c_3$, $c_4$
and exactly one of the remaining candidates. However, the set $\{1, 2, 3, 4\}$
is $2$-cohesive, and thus PJR requires that it gets two representatives.
\end{proof}

\section{Average Satisfaction}\label{sec:as}
A useful measure in the context of fair representation is that of
{\em average satisfaction}: given a ballot profile $(A_1, \dots,
A_n)$, a committee $W$, and a group of voters $N^*\subseteq N$, we
define the {\em average satisfaction} of the voters in $N^*$ as
\begin{equation}
\frac{1}{|N^*|}\sum_{i\in N^*}|A_i\cap W|.
\end{equation}
While it is not always possible to ensure that every group of voters has high average
satisfaction, it is natural to ask if we can provide some guarantees
with respect to this measure to groups of voters that are large and
cohesive.

Our first observation is that if a committee $W$ provides JR, we can derive 
a lower bound on the average satisfaction of such groups\footnote{Throughout this section, we assume that $k$ divides $n$. This assumption allows us to avoid technical complications, but similar (albeit slightly weaker) results can be obtained if $k$ is not required to divide $n$.}.

\begin{proposition}\label{prop:jr-avs}
Consider a ballot profile $(A_1, \dots, A_n)$, and suppose that
the target committee size $k$ divides $n$.
Let $W$ be a committee of size $k$ that provides JR. 
Then for every $\ell\in [k]$ 
and for every $\ell$-cohesive group of voters $N^*$
we have
$$
\frac{1}{|N^*|}\sum_{i\in {N^*}} |A_i\cap W| \geq 1 - \frac{1}{\ell}+\frac{k}{\ell n}.
$$
\end{proposition}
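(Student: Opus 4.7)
The plan is to bound, via JR, the number of voters in $N^*$ who have no approved candidate in $W$, and then turn that count into a lower bound on $\sum_{i\in N^*}|A_i\cap W|$. Define $N'=\{i\in N^*:A_i\cap W=\emptyset\}$. The key observation is that $N'$ inherits a mild form of cohesiveness from $N^*$: since $|\bigcap_{i\in N^*}A_i|\geq \ell\geq 1$, there is at least one candidate commonly approved by every voter in $N^*$, hence by every voter in $N'\subseteq N^*$, so $|\bigcap_{i\in N'}A_i|\geq 1$. Thus $N'$ would be a $1$-cohesive set of voters whose members each approve zero committee members if $|N'|\geq n/k$, directly contradicting the assumption that $W$ provides JR.

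Therefore $|N'|<n/k$, and because the hypothesis $k\mid n$ makes $n/k$ an integer, this sharpens to $|N'|\leq n/k-1$. Each voter in $N^*\setminus N'$ contributes at least one unit to $\sum_{i\in N^*}|A_i\cap W|$, so the sum is at least $|N^*|-n/k+1$. Dividing by $|N^*|$ and plugging in the cohesiveness bound $|N^*|\geq \ell n/k$, I get average satisfaction at least $1-(n/k-1)/(\ell n/k)=1-\frac{1}{\ell}+\frac{k}{\ell n}$, which is at least the claimed $1-\frac{1}{\ell}+\frac{1}{\ell n}$ since $k\geq 1$.

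I do not foresee any real obstacle. The only point requiring care is the passage from the strict inequality $|N'|<n/k$ to the integer bound $|N'|\leq n/k-1$, which relies on the divisibility hypothesis $k\mid n$ and is precisely what produces the additive $\tfrac{1}{\ell n}$ correction beyond the crude bound $1-\tfrac{1}{\ell}$ one would obtain from continuous relaxation.
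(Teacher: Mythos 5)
Your proof is correct and follows essentially the same route as the paper: both arguments use JR to show that at most $\frac{n}{k}-1$ voters of $N^*$ approve no member of $W$, and then divide. The only (cosmetic) difference is that you apply JR once, directly to the set of unsatisfied voters, whereas the paper extracts satisfied voters one at a time by repeatedly applying JR to shrinking subsets of $N^*$; both yield the same bound $1-\frac{1}{\ell}+\frac{k}{\ell n}\ge 1-\frac{1}{\ell}+\frac{1}{\ell n}$.
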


\begin{proof}
Let $s=\frac{n}{k}$. Fix an $\ell\in [k]$ 
and an $\ell$-cohesive group of voters $N^*$.
Since $W$ provides JR, Proposition~\ref{prop:jr-tight}
implies that at most $s-1$ voters in $N^*$ are unrepresented in $W$.
Hence,
\begin{align*}
&\frac{1}{|N^*|}\sum_{i\in N^*} |A_i\cap W|  \geq \frac{|N^*|-s+1}{|N^*|} %
&= 1 - \frac{s-1}{|N^*|}\geq 1 - \frac{s-1}{\ell s} = 
1 - \frac{1}{\ell} + \frac{k}{\ell n}.
\end{align*}
\end{proof}

However, for voting rules that satisfy EJR we can obtain a much stronger guarantee.

\begin{theorem}
Consider a ballot profile $(A_1, \dots, A_n)$, and suppose that
the target committee size $k$ divides $n$. 
Let $W$ be a committee of size $k$ that provides EJR.
Then for every $\ell\in [k]$ and for
every $\ell$-cohesive group of voters $N^*\subseteq N$  
we have
$$
\frac{1}{|N^*|}\sum_{i\in N^*} |A_i\cap W| \geq \frac{\ell-1}{2}.
$$
\end{theorem}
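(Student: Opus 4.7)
The plan is to strengthen the basic EJR guarantee by a peeling argument applied to nested subsets of $N^*$. The key observation is that cohesiveness is hereditary: every subset $S \subseteq N^*$ satisfies $|\bigcap_{i \in S} A_i| \geq |\bigcap_{i \in N^*} A_i| \geq \ell$, so whenever $|S| \geq t\cdot n/k$ for some $t \in [\ell]$, the set $S$ is itself $t$-cohesive, and the EJR property of $W$ yields some voter $i \in S$ with $|A_i \cap W| \geq t$.

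Let $s = n/k$, which is an integer by hypothesis, and order the voters of $N^*$ in nonincreasing order of $|A_i \cap W|$. For each $t \in \{1, \ldots, \ell\}$ I would let $S_t \subseteq N^*$ consist of the $ts$ voters with the smallest values of $|A_i \cap W|$ (ties broken arbitrarily); since $|N^*| \geq \ell s \geq ts$, this set is well-defined and, by the hereditary observation, is $t$-cohesive. Hence some voter in $S_t$ has $|A_i \cap W| \geq t$, and so the maximum satisfaction inside $S_t$ is at least $t$. Because $S_t$ collects the lowest-satisfaction voters of $N^*$, every voter in $N^* \setminus S_t$ then has $|A_i \cap W| \geq t$ as well, yielding the counting bound
\[
|\{i \in N^* : |A_i \cap W| \geq t\}| \geq |N^*| - ts \quad \text{for all } t \in [\ell].
\]

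Finally, I would combine these bounds using the layer-cake identity:
\[
\sum_{i \in N^*} |A_i \cap W| \;=\; \sum_{t \geq 1} |\{i \in N^* : |A_i \cap W| \geq t\}| \;\geq\; \sum_{t=1}^{\ell}(|N^*| - ts) \;=\; \ell\,|N^*| - \frac{s\ell(\ell+1)}{2}.
\]
Dividing by $|N^*|$ gives an average of at least $\ell - \frac{s\ell(\ell+1)}{2|N^*|}$; this expression is nondecreasing in $|N^*|$, so since $|N^*| \geq \ell s$ it is minimized at $|N^*| = \ell s$, where it equals $\ell - (\ell+1)/2 = (\ell-1)/2$.

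\textbf{Main obstacle.} The only real conceptual step is the hereditary observation: one must realize that EJR should be applied not only to the given group $N^*$ but to each of its sufficiently large subsets, using a correspondingly smaller cohesiveness parameter. Once this viewpoint is adopted, the rest is a transparent counting computation; the only technicalities are the integrality of $ts$ (ensured by $k \mid n$) and checking that the lower bound on the average is minimized exactly at $|N^*| = \ell s$.
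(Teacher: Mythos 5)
Your proof is correct and rests on the same key idea as the paper's: cohesiveness is hereditary, so for each $t\le\ell$ one can apply $t$-JR to a subset of $N^*$ of size $ts$ and conclude that at most $ts$ voters of $N^*$ (the paper in effect gets $ts-1$) have satisfaction below $t$. The paper organizes this as an explicit partition of $N^*$ into satisfaction levels $N_\ell,\dots,N_0$ followed by a convex-combination estimate, while you organize it as sorted tail bounds summed via the layer-cake identity; the two computations are equivalent and both yield the bound $\frac{\ell-1}{2}$.
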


\begin{proof}
Let $s=\frac{n}{k}$. Fix an $\ell\in [k]$ 
and an $\ell$-cohesive group of voters $N^*$;
let $n^* = |N^*|$. For each $j\in [\ell]$, 
we say that a voter in $N^*$
is {\em $j$-happy} is she approves at least $j$
candidates in $W$, and {\em $j$-unhappy otherwise}.

We claim that for each $j\in [\ell]$ the set $N^*$
contains at most $j\cdot s-1$ voters who are $j$-unhappy.
Indeed, suppose that for some $j\in [\ell]$ 
there are at least $j\cdot s$ voters in $N^*$ who are $j$-unhappy.
As all these voters are in $N^*$, there is a set
of $\ell\ge j$ candidates that they all approve; hence, these voters
form a $j$-cohesive group, yet none of them is $j$-happy, 
a contradiction with the assumption that $W$ provides EJR.

It follows that $N^*$ contains at least $n^*-\ell\cdot s+1$ voters
who are $\ell$-happy.
Let $N_\ell$ be a set that consists of exactly $n^*-\ell\cdot s+1$ 
voters who are $\ell$-happy.

Similarly, $N^*$ contains at least $n^*-(\ell-1)\cdot s+1$ voters
who are $(\ell-1)$-happy. Thus, $N^*\setminus N_\ell$ contains at least
$(n^*-(\ell-1)\cdot s+1)-(n^*-\ell\cdot s+1) = s$ voters who are $(\ell-1)$-happy;
let $N_{\ell-1}$ denote a set that consists of exactly $s$ voters in
$N^*\setminus N_\ell$ who are $(\ell-1)$-happy.

We continue inductively in the same manner: for each 
$j=\ell-2, \dots, 1$ we construct a set $N_j$ of exactly $s$ voters who are
$j$-happy so that the sets $N_1, \dots, N_\ell$ are all pairwise disjoint.
This is possible exactly because for each $j=\ell-2, \dots, 1$ 
there are at least $n^*-j\cdot s+1$ voters who are $j$-happy, while
$$
|N_\ell|+|N_{\ell-1}|+\dots+|N_{j+1}|= n^*-\ell\cdot s+1+(\ell-1-j)\cdot s 
=n^*-j\cdot s +1 - s.
$$
The average satisfaction of voters in $N^*\setminus N_\ell$ is then at least 
\begin{align*} 
		      \frac{1}{|N^*\setminus N_\ell|}\sum_{i\in N^*\setminus N_\ell} |A_i\cap W| 
		&\geq \frac{1}{\ell s-1}\cdot \sum_{j=1}^{\ell-1} |N_j|\cdot j %
                &\geq \frac{1}{\ell s}\cdot \frac{s(\ell-1)\ell}{2} = \frac{\ell-1}{2}, 
\end{align*} 
whereas the average satisfaction of the voters in $N_\ell$ is at least $\ell$.
As the average satisfaction of voters in $N^*$ is a convex combination
of these two quantities, it is at least $\frac{\ell-1}{2}$. 
\end{proof}

In contrast, the worst-case guarantee provided by PJR 
is not any stronger than the one provided by JR alone.

\begin{example}
\label{ex:avs-pjr}
Consider a ballot profile $(A_1, \dots, A_n)$ over a candidate set
$C=\{c_1,\dots,c_n, d_1,\dots,d_n\}$ where 
$A_i=\{c_1,\dots,c_n,d_i\}$ for $i\in[n]$. 
For $k=n$, the committee $\{d_1,\dots,d_n\}$ 
provides PJR (and PR), but the average satisfaction of the voters in $N$ 
(which form an $n$-cohesive group) is only $1$.
\end{example}

\begin{remark}\label{rem:verycohesive}
We say that a group of voters $N^*$ is {\em very cohesive}  
if all voters in $N^*$ approve
exactly the same set of candidates. 
For very cohesive groups, the average satisfaction guarantees provided
by PJR are as high as those provided by EJR, and much higher than 
those provided by JR. Formally, consider an election in which 
there is a group of voters $N^*$, 
$|N^*|\ge \ell\cdot\frac{n}{k}$,
who all approve precisely the same set of candidates $S$, $|S|\ge \ell$.
Then every committee that provides PJR (or EJR) for this election
elects at least $\ell$ members of $S$, thereby ensuring that the average satisfaction 
of voters in $N^*$ is at least $\ell$.
In contrast, a committee that provides JR may select just one member of $S$,
in which case the average satisfaction of voters in $N^*$
will be just $1$.
\end{remark}

\section{Fractional Perfect Representation}\label{sec:fpr}
As discussed in Section~\ref{sec:pr}, the perfect representation axiom has an important limitation:
it only constrains the behavior of a voting rule 
in settings where the committee size $k$ divides 
the number of ballots $n$. We now propose a more general
version of this axiom, which applies to all values of $k$ and $n$.

\begin{definition} \label{def:fpr} 
 {\bf Fractional perfect representation (FPR)} Consider a ballot profile
$\mathcal{A} = (A_1, \dots,$ $A_n)$ over a candidate set $C$, and a
target committee size $k$, $k \leq |C|$. We say that a set of
candidates $W$, $|W| = k$, provides {\em fractional perfect
  representation (FPR)} for $(\calA, k)$ if there exists a collection of non-negative reals ${\mathcal V} = (v_{i,w})_{i \in N,w \in W}$ satisfying
  \begin{eqnarray*}
    0 \leq v_{i,w} \leq 1 & & \textrm{for all $i \in N$ and $w \in W$,} \\
    v_{i,w} > 0 \Rightarrow w \in A_i & & \textrm{for all $i \in N$ and $w \in W$,} \\
    \sum_{w \in W} v_{i,w} = 1 & & \textrm{for all $i \in N$, and}\\
    \sum_{i \in N} v_{i,w} =\frac{n}{k} & & \textrm{for all $w \in W$.}
  \end{eqnarray*}
An approval-based voting rule {\em satisfies FPR} 
if for every profile $\mathcal{A}$
and every target committee size $k$, the rule outputs
a committee that provides FPR for $(\calA, k)$ whenever such a committee exists.
\end{definition}  

The intuition behind FPR is that, like PR, it requires that each member of $W$ represents exactly $\frac{n}{k}$ voters, but it relaxes the way a voter can be represented in the committee, 
by allowing fractional representation. 

We remark that not every instance of the committee selection problem
admits a committee that provides FPR. 
Consider, e.g., the ballot profile $\mathcal{A}=(A_1, A_2)$ with $A_1=\{a\}$ and $A_2=\{b\}$. For 
$k=1$, one of the two voters remains unrepresented, no matter which $W$ 
we select. Hence, for every choice of $W\in\{\{a\}, \{b\}\}$ it is impossible to satisfy 
the condition $\sum_{i \in N} v_{i,w} =\frac{n}{k} = 2$ for $w\in W$.
On such instances, the FPR axiom is not binding. 

It is convenient to think of the FPR axiom in terms of network flows.
Specifically, given a ballot profile $\mathcal{A} = (A_1, \dots,$ $A_n)$ over a candidate set
$C$ and a committee $W\subseteq C$, $|W|=k$, we construct a flow network 
${\mathcal N}_{\calA, W}$ with source $s$, sink $t$, the set of nodes $N\cup W \cup\{s, t\}$,  
and the following set of arcs: (1) for each $i\in N$, 
a unit-capacity arc from $s$ to $i$; (2) for each $i\in N$ and each $w\in A_i$, a unit-capacity arc from $i$ to $w$; (3) for each $w\in W$, 
an arc of capacity $\frac{n}{k}$ from $w$ to $t$.
Then there is a one-to-one correspondence between 
collections of reals $(v_{i, w})_{i\in N, w\in W}$ 
that satisfy the conditions in Definition~\ref{def:fpr}
and flows of size $n$ in this network: the quantity $v_{i, w}$
can be interpreted as the flow on the arc $(i, w)$
and vice versa.

This connection to network flows enables us to establish a 
relationship between PR and FPR for the case where $k$ divides $n$.

\begin{theorem}\label{thm:fpr-pr}  
For every profile $\mathcal{A} = (A_1, \dots,$ $A_n)$ and every target
committee size $k$ such that $k$ divides $n$ it holds that 
a size-$k$ subset of candidates provides FPR if and only if it provides PR.
\end{theorem}

\begin{proof}
It is immediate that PR implies FPR. Indeed,
suppose that a committee $W$ provides PR for $(\calA, k)$. 
We can assume without 
loss of generality that $W=\{c_1, \dots, c_k\}$, and that the voters
are partitioned into $k$ pairwise disjoint subsets $N_1, \dots, N_k$ of size $\frac{n}{k}$ each
so that for each $\ell\in [k]$ all voters in $N_\ell$ approve $c_\ell$.
We can then set $v_{i,w}=1$ if there exists an $\ell\in [k]$ such that
$i\in N_\ell$, $w=c_\ell$ and $v_{i,w}=0$ otherwise. Then the collection
${\mathcal V} = (v_{i,w})_{i\in N, w\in W}$ witnesses that $W$
provides FPR for $(\calA, k)$.

To see that FPR implies PR when $k$ divides $n$, suppose that a committee $W$ provides FPR for $(\calA, k)$. Then the network ${\mathcal N}_{\calA, W}$ 
admits a flow of size $n$.
Crucially, if $k$ divides $n$, all arcs of this network have integer capacity, and hence the network admits an integer flow of size $n$. Fix some such flow.
For each $i\in N$, there is exactly one arc leaving $i$ that carries
one unit of flow, and for each $w\in W$ there are exactly $\frac{n}{k}$
flow-carrying arcs entering $w$. Hence, any integer flow in ${\mathcal N}_{\calA, W}$ 
serves as a witness that $W$ provides PR for $(\calA, k)$.
\end{proof}

Theorem~\ref{thm:fpr-pr} indicates that FPR is a generalization of PR. The following example shows that it is strictly more demanding.

\begin{example}\label{ex:fpr}
  Let $C= \{a, b, c\}$, and consider the ballot profile 
$\calA = (A_1, A_2, A_3)$, where 
$A_1= \{a, c\}$, $A_2= \{b, c\}$, and $A_3= \{b\}$
(see Figure~\ref{fig:ex-fpr}).
Set $k= 2$. Note that $k$ does not divide $n$, so the PR axiom does not apply. However, the FPR axiom is binding. Indeed, consider the committee
$W= \{b, c\}$. Observe that $W$ provides FPR: we can set $v_{1c}=1$, $v_{2b}=v_{2c}=.5$, $v_{3b}=1$. On the other hand, $W' = \{a, c\}$
and $W''=\{a, b\}$ do not provide FPR: 
indeed, both $W'$ and $W''$ contain $a$, 
but there is only one voter who approves $a$, so the condition $v_{1a}+v_{2a}+v_{3a}=1.5$ cannot be satisfied.
Thus, any rule that satisfies FPR must output $W$ on $(\calA, k)$.
\end{example}

\begin{figure}
  \centering
  \includegraphics[width=0.3\textwidth]{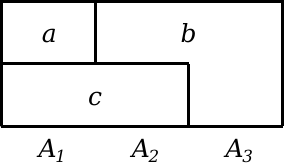}
  \caption{Ballot profile in Example~\ref{ex:fpr}.}
  \label{fig:ex-fpr}
\end{figure}

Theorem~\ref{thm:fpr-pr} also implies that
deciding whether a committee provides FPR is NP-complete.
\begin{proposition}
Given a ballot profile $\calA$ and a target committee size $k$, it is
{\rm NP}-complete to decide whether there exists a committee that provides
FPR for $(\calA, k)$.
\end{proposition}
\begin{proof}
NP-hardness follows from Theorems~\ref{thm:pr-hard} and~\ref{thm:fpr-pr}. To prove that this problem is in NP, we observe that, given an $n$-voter ballot profile $\calA$, a target committee size $k$, and a set of candidates $W$ of size $k$, we can construct the network ${\mathcal N}_{\calA, W}$ and check if it admits a flow of size $n$;
the latter problem is known to be polynomial-time solvable.
\end{proof}

\subsection{FPR and the Monroe Rule}
We have argued that the Monroe rule satisfies PR. It is then natural to ask whether it satisfies FPR. It turns out that the answer depends on how we break ties. We first observe that Example~\ref{ex:fpr} can be used to show that the Monroe rule may output a committee that does not provide FPR.

\begin{proposition}\label{prop:fpr-monroe-bad}
There exists a ballot profile $\calA$ and a target committee size $k$ 
such that there exists a committee $W$, $|W|=k$, that provides FPR for $(\calA, k)$, but the Monroe rule may output a set of candidates $W'$ that does not provide FPR for $(\calA, k)$.
\end{proposition}
\begin{proof}
Consider the profile $\calA$ in Example~\ref{ex:fpr} and committees $W=\{b, c\}$ and $W''=\{a, b\}$. We have argued that $W$ provides FPR, but $W''$ does not. However, both committees have a Monroe score of $3$, so the Monroe rule may output~$W''$.  
\end{proof}

On the other hand, a committee that provides FPR always has the maximum possible Monroe score. 

\begin{proposition}\label{prop:fpr-monroe-good}
Consider a ballot profile $\calA = (A_1, \dots,$ $A_n)$ over 
a candidate set $C$, and a committee $W\subseteq C$, $|W|=k$.
If $W$ provides FPR for $(\calA, k)$, then the Monroe score 
of $W$ is $n$. 
\end{proposition}
\begin{proof}
Consider again the flow network ${\mathcal N}_{\calA, W}$. We turn it into an instance of the circulation problem with capacities and lower bounds as follows. For each arc leaving $s$ we set both its capacity and lower bound to $1$. For each arc between $N$
and $W$ we set its capacity to $1$ and the lower bound to $0$. For each arc entering
$t$ we set its capacity to $\lceil\frac{n}{k}\rceil$ and lower bound to $\lfloor\frac{n}{k}\rfloor$. We also add an arc from $t$ to $s$ and set its capacity and lower bound to $n$. 

If $W$ provides FPR for $(\calA, k)$, the collection 
${\mathcal V} = (v_{i,w})_{i\in N, w\in W}$ corresponds to a valid circulation in the modified network: in this circulation, the flow on each arc leaving $s$ 
is $1$, the flow on each arc entering $t$ is $\frac{n}{k}$, the flow on the $t$--$s$ arc is $n$ and the flow on each arc of the form $(i, w)$ is $v_{i,w}$. 
On the other hand, in the modified network all capacities and lower bounds
are integers and hence it admits an integer circulation. Any such circulation corresponds to a valid mapping $\pi: N\to W$, and the Monroe score of this mapping is $n$.
\end{proof}

Together, Propositions~\ref{prop:fpr-monroe-bad} and~\ref{prop:fpr-monroe-good} imply the following corollary.

\begin{corollary}
The Monroe rule satisfies FPR as long as ties are broken in favor of committees that provide FPR.
\end{corollary}

We note that there exist voting rules that satisfy FPR irrespective of the tie-breaking rule. Indeed, \citet{brill:phragmen} prove that two optimization variants of the Phragm\'en rule (namely, leximax-Phragm\'en and var-Phragm\'en) satisfy PR, and their proof can be used to show that these rules also satisfy FPR. 
Moreover, leximax-Phragm\'en satisfies PJR \citep{brill:phragmen} and even the stronger property known as priceability~\citep{peters2020proportionality} (discussed later in this section). To the best of our knowledge, leximax-Phragm\'en is the only known rule satisfying both priceability and FPR.
For readability, we omit the definitions of these rules and the formal proofs that they satisfy FPR.
\subsection{FPR and Other Proportionality Axioms}

Recall that in Section~\ref{sec:pr} we have observed 
that PR is incompatible with EJR;
Theorem~\ref{thm:fpr-pr} implies that the incompatibility of PR and EJR (as well as PR and the core) is inherited by FPR. This incompatibility 
motivated us to introduce the PJR axiom. It is therefore natural to ask if the FPR axiom is compatible with PJR.
It turns out that the answer is `yes'. In fact, a stronger claim is true: 
every committee that provides FPR also provides PJR.

\begin{theorem}\label{thm:pjr-fpr}
For every profile $\mathcal{A} = (A_1, \dots, A_n)$ and every target
committee size $k$, if a set of candidates $W$, $|W| = k$, provides
FPR for $(\calA, k)$, then $W$ also provides PJR.
\end{theorem}

\begin{proof}
Consider a ballot profile $\mathcal{A} = (A_1, \dots, A_n)$, a target
committee size $k$, a set of candidates $W$ that provides FPR for
$(\calA, k)$, and a flow of size $n$ in the network 
${\mathcal N}_{\calA, W}$. Let $N^*$ be an $\ell$-cohesive group
of voters. Observe that in ${\mathcal N}_{\calA, W}$ 
all flow leaving $N^*$ 
must arrive to $W\cap (\bigcup_{i\in N^*}A_i)$. There 
are $|N^*|\ge \ell\cdot\frac{n}{k}$ units of flow leaving $N^*$
towards $W$ and $|W\cap (\bigcup_{i\in N^*}A_i)|\cdot \frac{n}{k}$ units 
of flow leaving $W\cap (\bigcup_{i\in N^*}A_i)$ towards $t$.
Hence, $|W\cap (\bigcup_{i\in N^*}A_i)|\ge \ell$.
\end{proof}

Consider a committee $W$ that provides FPR. 
By Theorem~\ref{thm:pjr-fpr}
it also provides PJR. Hence, Proposition~\ref{prop:jr-avs}
implies a lower bound on the average satisfaction guaranteed by $W$. 
However, one can get a stronger lower bound directly from the definition
of FPR: FPR implies that each voter has at least one representative 
in $W$, so the average satisfaction guaranteed by 
$W$ is at least $1$.
In contrast, for very cohesive groups of voters the connection to PJR 
is useful: it allows us to obtain even stronger average satisfaction 
guarantees by Remark~\ref{rem:verycohesive}. 

\begin{proposition}\label{prop:fpr-as}
    If a committee provides FPR, the average satisfaction of any group of voters  is at least $1$. Moreover, consider a group of voters $N^*$ 
    such that $|N^*|\ge \ell\cdot\frac{n}{k}$ and 
    for all $i\in N^*$ it holds that $A_i=S$ for some 
    $S\subseteq C$ with $|S|\ge \ell$. Then the average satisfaction
    of $N^*$ is at least $\ell$. 
\end{proposition}

The lower bound established in Proposition~\ref{prop:fpr-as}
is tight: Example~\ref{ex:avs-pjr} shows 
that it is not possible to achieve stronger worst-case guarantees 
for committees that provide FPR (or PR). 

FPR is also compatible with a somewhat stronger version of PJR, recently proposed by~\citet{aziz2021proportionally}, called Inclusion PSC (IPSC). 
IPSC is defined for weak preferences. In a recent paper, \citet{brill2023robust} 
have renamed the specialization of IPSC to approval-based elections as PJR+. 
This axiom is formulated as follows.

\begin{definition}
    {\bf PJR+} Consider a ballot profile
$\mathcal{A} = (A_1, \dots,$ $A_n)$ over a candidate set $C$, and a
target committee size $k$, $k \leq |C|$. A set of
candidates $W$, $|W| = k$, provides {\em PJR+ (IPSC for approval-based elections)}
if there does not exist a positive integer $\ell \in [k]$ and
a set of voters $N^* \subseteq N$ with $|N^*| \geq \ell\cdot \frac{n}{k}$ 
 such that $|W \cap \bigcup_{i \in N^*} A_i| < \ell$, but 
 $(\bigcap_{i \in N^*} A_i)\setminus W\neq\varnothing$. 
\end{definition}

It is not hard to check that PJR+ implies PJR. Indeed, if 
a committee $W$ does not provide PJR, there exists an 
$\ell\in [k]$ and an $\ell$-cohesive group of voters $N^*$
such that $|W \cap \bigcup_{i \in N^*} A_i| < \ell$. But then, since
$|\bigcap_{i\in N^*}A_i|\ge \ell$, it follows that
$(\bigcap_{i \in N^*} A_i)\setminus W\neq\varnothing$, 
i.e, $W$ does not provide PJR+ either.
Moreover, PJR+ is more demanding than PJR in that 
it offers non-trivial representation guarantees
to large groups even if they only agree on a single candidate;
see the work of \citet{brill2023robust} for specific examples.

We will now argue that FPR implies PJR+.

\begin{proposition}
    For every profile $\mathcal{A} = (A_1, \dots, A_n)$ and every target
committee size $k$, if a size-$k$ set of candidates $W$ provides
FPR for $(\calA, k)$, then $W$ also provides PJR+.
\end{proposition}

\begin{proof}
    In the proof of Theorem~\ref{thm:pjr-fpr}, we argue that 
    if $W$ is a size-$k$ committee that provides FPR and 
    $N^*$ is an $\ell$-cohesive group of voters, 
    then $|W\cap (\bigcup_{i\in N^*}A_i)|\ge \ell$.
    In fact, for this argument to go through, 
    it suffices to assume that $|N^*|\ge \ell\cdot\frac{n}{k}$. 
    From this, the proposition follows.
\end{proof}

We will now discuss the relationship between FPR and two 
other proportionality axioms
that have been 
recently been put forward by~\citet{peters2020proportionality}:
priceability and laminar proportionality.

\begin{definition}
{\bf Priceability~(\citet{peters2020proportionality})} A {\em price system} for a ballot profile $\calA=(A_1, \dots, A_n)$ over a candidate set $C$
is a pair $(p,{\mathbf q})$, where $p > 0$ is the {\em price}, and ${\mathbf q}= (q_i)_{i \in N}$, where $q_i: C \rightarrow [0,1]$, is the {\em payment function} of
voter $i$, which satisfies the following conditions:
\begin{itemize}
\item
  for each $c\in C\setminus A_i$ it holds that $q_i(c)=0$, and
\item
  $\sum_{c \in C} q_i(c) \leq 1$.
\end{itemize}
A set of candidates $W$ is {\em priceable} if there exists a price system
$(p,{\mathbf q})$ satisfying
\begin{itemize}
\item
  $\sum_{i \in N} q_i(w)= p$ for each $w\in W$,
\item
  $\sum_{i \in N} q_i(c)= 0$ for each $c\in C\setminus W$, and
\item
  $\sum_{i \in N: c \in A_i} \big (1 - \sum_{w \in W} q_i(w) \big ) \leq p$ 
  for each $c\in C\setminus W$.
\end{itemize}
\end{definition}  

The next proposition establishes a tight relationship between FPR 
and priceability with price $p= \frac{n}{k}$.
\begin{proposition}
    For every ballot profile $\mathcal{A} = (A_1, \dots, A_n)$ and every target
committee size $k$, a size-$k$ set of candidates $W$ provides
FPR for $(\calA, k)$ if and only if $W$ is priceable with price $p= \frac{n}{k}$. 
\end{proposition}
\begin{proof}
    Suppose first that $W$ provides FPR for $(\calA, k)$, and let $(v_{i,w})_{i \in N,w \in W}$ be the respective collection of non-negative reals.
    We can then define the functions $(q_i)_{i\in N}$ by setting $q_i(w)=v_{i, w}$
    for $w\in W$, $q_i(c)=0$ for $c\in C\setminus W$. Then the price system 
    $(\frac{n}{k}, (q_i)_{i\in N})$ witnesses that $W$ is priceable; in particular, 
    for each $c\in C\setminus W$ we have 
    $\sum_{i \in N: c \in A_i} \big (1 - \sum_{w \in W} q_i(w) \big ) =0$.

    Conversely, suppose that the set $W$ is priceable, as witnessed by a price system
    $(\frac{n}{k}, (q_i)_{i\in N})$. Set $v_{i, w}=q_i(w)$ for all $w\in W$.
    Then, by construction, for all $i\in N$, $w\in W$ we have $0\le v_{i, w}\le 1$ 
    and $v_{i, w}=0$ whenever $w\not\in A_i$. Moreover, $\sum_{i\in N}v_{i, w}=\frac{n}{k}$
    for each $w\in W$. It then follows that $\sum_{w\in W}\sum_{i\in N}v_{i, w}=n$;
    as $\sum_{w\in W}v_{i, w} =\sum_{c\in C}q_i(c)\le 1$ for all $i\in N$, it has to be the case
    that $\sum_{w\in W}v_{i, w}=1$ for each $i\in N$. Hence, $(v_{i, w})_{i \in N,w \in W}$ witnesses that $W$ provides FPR.
\end{proof}

\citet{peters2020proportionality} prove that if an
approval-based multi-winner voting rule always outputs priceable
committees, then this rule is an extension of the D'Hondt
apportionment method (see, e.g., the book of \citet{puk} for a survey of apportionment methods).
In contrast, later in this section 
we will show that, in the context of apportionment, FPR reduces to a weaker axiom, which is called {\it weak proportionality}~\cite[p.~96--97]{BaYo82a}. 
This axiom is satisfied by any reasonable apportionment
method, including any divisor method and the largest remainders
method. In particular, there exists a voting rule 
that satisfies FPR, JR, 
and is an extension of the Sainte-Lagu\"e
method\footnote{Since the Sainte-Lagu\"e method does not satisfy
lower quota, and EJR and PJR reduce to lower quota for apportionment 
methods~\cite{brill2018multiwinner}, no
extension of Sainte-Lagu\"e can satisfy EJR or PJR.}: specifically, the rule
var-Phragm\'en~\cite{phragmen:p1,phragmen:p2,phragmen:pOpt,phragmen:p4} has all these properties (the survey by~\citet{2016arXiv161108826J}
shows that var-Phragm\'en is an extension of the Sainte-Lagu\"e
method, and \citet{brill:phragmen} prove that var-Phragm\'en
satisfies both PR and JR; their proof for PR extends easily to FPR).

Another concept introduced by \citet{peters2020proportionality}  is laminar proportionality. To define it, we need to 
introduce additional notation. 
Given a ballot profile $\mathcal{A} = (A_1, \dots,$ $A_n)$,
denote the profile $(A_1 \setminus \{c\}, \dots, A_n \setminus \{c\})$ by 
$\mathcal{A}_{\setminus c}$.

\begin{definition}
  {\bf Laminar proportionality~(\citet{peters2020proportionality})} An
  election with a ballot profile $\mathcal{A} = (A_1, \dots,$ $A_n)$
  and target committee size $k$ is {\em laminar} if one of the following
  conditions holds:

\begin{enumerate}
\item[(UP)]
(unanimous profile)  $A_1= A_2= \ldots= A_n$, and $|A_1| \geq k$,
\item[(UC)]
(unanimous candidate) $k\ge 2$ and there is a candidate $c\in C$ such that $c\in A_i$ 
  for all $i\in [n]$ and $(\calA_{\setminus c},
  k-1)$ is laminar, but does not satisfy (UP), or
\item[(LD)]
(laminar decomposition) There exist two ballot profiles
  $\mathcal{A}^1 = (A^1_1, \dots,$ $A^1_{n_1})$ and $\mathcal{A}^2 =
  (A^2_1, \dots,$ $A^2_{n_2})$ and two target committee sizes $k_1$
  and $k_2$ such that: 
  (i) $\mathcal{A}$ is obtained by permuting the list $(A^1_1, \dots,$ $A^1_{n_1}, A^2_1,
  \dots,$ $A^2_{n_2})$; 
  (ii) $k= k_1 + k_2$; 
  (iii) $(\calA^1, k_1)$ and $(\calA^2, k_2)$ are laminar;
  (iv) $\bigcup_{i \in [n_1]} A^1_i \cap \bigcup_{i \in [n_2]} A^2_i = \varnothing$; and
  (v) $\frac{n_1}{k_1}=\frac{n_2}{k_2}$.
\end{enumerate}

Given a laminar election $(\calA, k)$, 
we say that a set of candidates $W$ of size $k$ is {\em laminar proportional} if one of the following conditions holds:
\begin{enumerate}
\item
  $(\calA, k)$ satisfies (UP) and $W \subseteq A_1$,
\item
  $(\calA, k)$ satisfies (UC) with $c$ as a
  unanimous candidate, $c\in W$, and $W \setminus \{c\}$ is
  laminar proportional for election $(\calA_{\setminus c}, k-1)$, or
\item
  $(\calA, k)$ satisfies (LD), and $W$ can be decomposed as $W= W_1 \cup W_2$, where $W_1$ is laminar proportional for $(\calA^1, k_1)$ and
  $W_2$ is laminar proportional for $(\calA^2, k_2)$.
\end{enumerate}
An approval-based multi-winner voting rule is {\em laminar proportional} 
if, given a laminar election $(\calA, k)$,  
it outputs a laminar proportional set of candidates.
\end{definition}

The relationship between FPR and laminar proportionality is given by the following theorem.

\begin{theorem} \label{thm:fpr-lam}
Consider a laminar election with a ballot profile $\mathcal{A} = (A_1,
\dots,$ $A_n)$ and target committee size $k$. If a set of candidates
$W$ is laminar proportional for $(\calA, k)$, then $W$ also
provides FPR for $(\calA, k)$; however, the converse is not true.
\end{theorem}
\begin{proof}
If $(\calA, k)$ satisfies (UP), we can simply set $v_{i,w}= 1/k$ for each voter $i\in N$ and each candidate $w\in W$. 
If $(\calA, k)$ satisfies (UC) with $c\in W$ as a unanimous candidate,  
we recursively construct a collection 
${\mathcal V}'=(v'_{i, w})_{i\in N, w\in W\setminus\{c\}}$ 
for election $(\calA_{\setminus c}, k-1)$ and
committee $W \setminus \{c\}$, set 
$v_{i,w}= v'_{i,w}\cdot \frac{k-1}{k}$ for each $i\in N$ 
and each candidate $w\in W\setminus \{c\}$, 
and then set $v_{i,c}= 1/k$ for each $i\in N$. 
Finally, if $(\calA, k)$ satisfies (LD), we recursively construct a collection ${\mathcal V}'=(v'_{i, w})_{i\in N, w\in W_1}$ for 
election $(\calA^1, k_1)$ and committee $W_1$, 
and a collection ${\mathcal V}''=(v''_{i, w})_{i\in N, w\in W_2}$ for 
election $(\calA^2, k_2)$ and committee $W_2$.
Then for each $i\in N$, $w\in W$ we set 
$$
v_{i, w}=
\begin{cases}
v'_{i, w} &\text{ if $i\in [n_1], w\in W_1$}\\
v''_{i, w} &\text{ if $i\in N\setminus [n_1], w\in W_2$}\\
0 &\text{ otherwise.}
\end{cases}
$$
To see that the converse is not true, consider a profile $(\calA, k)$
where $k=4$, $\calA=(A_1, A_2, A_3, A_4)$, $A_1=A_2=\{a, b, c, d\}$, and
$A_3=A_4=\{a, b, c', d'\}$. Note that a set of candidates $W$
of size $4$ is laminar proportional for $(\calA, k)$ if and only if
$\{a, b\}\subset W$ and $|W\cap\{c, d\}|=|W\cap\{c', d'\}|=1$.
However, $W'=\{c, d, c', d'\}$ provides FPR for $(\calA, k)$:
indeed, we can set $v_{1, c}=v_{2, d}=v_{3, c'}=v_{4, d'}=1$.
\end{proof}  

We conclude that FPR is incomparable with laminar
proportionality. Indeed, it is weaker than laminar proportionality 
in the sense that for laminar elections there may exist committees that provide FPR, but not laminar proportionality (as shown in the second part
of the proof of Theorem~\ref{thm:fpr-lam}). On the other hand, 
it is stronger than laminar proportionality
in the sense that it constrains the behavior of voting rules on a larger set of elections (e.g., the election in Example~\ref{ex:fpr} is not laminar, but has an FPR committee).

\subsection{FPR and Individual Representation Axioms}
Unlike JR, PJR, and EJR, the FPR axiom provides representation guarantees
to individual voters rather than groups of voters. 
The literature on multi-winner approval voting
proposes several other axioms of this type. This includes, in particular, 
semi-strong and strong justified representation, put forward by~\citet{aziz:scw}, and individual representation, as defined by~\citet{brill2022individual}. 
Semi-strong justified representation requires that 
for each $1$-cohesive group
it holds that each voter in this group approves some member of the winning committee. The strong justified representation axiom is even more demanding:
it requires that for each $1$-cohesive group the winning committee contains
some candidate approved by all group members. Finally, individual representation requires that each member of each $\ell$-cohesive group approves at least $\ell$ members of the winning committee. 

It is immediate that FPR is stronger than semi-strong justified representation, in the sense that if a committee provides the former, it also provides the latter
(this follows, e.g., from Proposition~\ref{prop:fpr-as}), but the converse is not true (this follows from part~(3) of Proposition~\ref{prop:fpr-sjr} below).
In contrast, FPR can be shown to be incomparable with the other two axioms.

\begin{proposition}\label{prop:fpr-sjr}
FPR is incomparable with strong justified representation and individual representation, in the following sense:
    \begin{itemize}
    \item[(1)] There exists an election $(\calA, k)$ over a candidate set $C$ and a committee $W\subseteq C$ with $|W|=k$ such that $W$ provides FPR, but not strong justified representation.
    \item[(2)] There exists an election $(\calA, k)$ over a candidate set $C$ and a committee $W\subseteq C$ with $|W|=k$ such that $W$ provides FPR, but not individual representation.
    \item[(3)] There exists an election $(\calA, k)$ over a candidate set $C$ and a committee $W\subseteq C$ with $|W|=k$ such that $W$ provides strong justified representation and individual representation, but not FPR.
    \end{itemize}
\end{proposition}
\begin{proof}
We provide a separate construction for each claim.
    \begin{itemize}
     \item[(1)] Consider a ballot profile $(A_1, A_2, A_3, A_4)$ with 
     $A_1=\{a, b\}$, $A_2=\{b, c\}$, $A_3=\{c, d\}$, $A_4=\{d, a\}$, and $k=2$.
    The committee $\{a, c\}$ provides FPR for this election
    (we can set $v_{1, a}=v_{4, a}=1$, $v_{2, c}=v_{3, c}=1$).
    However, it does not provide strong justified representation:
    the first two voters form a $1$-cohesive group and approve $b$,
    but $b\not\in W$.
    \item[(2)] Consider a ballot profile $(A_1, A_2, A_3, A_4)$ with 
     $A_1=A_2=\{a, b, c\}$, $A_3=A_4=\{a, b, d\}$, and $k=2$.
    The committee $\{c, d\}$ provides FPR for this election
    (we can set $v_{1, c}=v_{2, c}=1$, $v_{3, d}=v_{4, d}=1$).
    However, it does not provide individual representation:
    the set of all voters forms a $2$-cohesive group, but each voter
    only approves one member of $\{c, d\}$.
    \item[(3)] Note that if no group
    of voters is $1$-cohesive, then strong justified representation 
    and individual representation are not binding, so every committee provides both;
    however, a committee may still fail FPR. Concretely, consider a ballot
    profile $(A_1, A_2)$ with $A_1=\{a\}$, $A_2=\{b\}$, and $k=1$.
    For this election, no group of voters is $1$-cohesive, so $\{a\}$
    provides strong justified representation 
    and individual representation, but not FPR.
    \end{itemize}
\end{proof}
A key difference between these axioms and PR/FPR is that PR/FPR provide representation guarantees to all voters, whereas the other axioms only provide guarantees to voters
belonging to $\ell$-cohesive groups (for a suitable value of $\ell\ge 1$).

 \subsection{Relationship with apportionment axioms}
 \label{sec:apportion}

In this section, we explore the relationship between FPR and an axiom for apportionment methods discussed by~\citet{BaYo82a}, called {\it weak proportionality}~\cite[p.~97]{BaYo82a}. We first need to introduce some terminology.

An {\em apportionment problem} is given by a vector of non-negative integers 
$u \in (\mathbb{Z_+})^p$ and a target number of seats $k \geq 1$; at least 
one component of $u$ is required to be positive.
An apportionment problem captures the ballots
cast in a closed-list election: $p$ is understood to be the number of parties that participate in the election, the component $u_j$ of $u$ is the number of votes received by party $j$, and the goal is to allocate $k$ seats. An {\em apportionment method} $M$ receives as input an apportionment problem $(u,k)$ and outputs a 
{\em seat allocation}, i.e., a vector $S \in (\mathbb{Z_+})^p$
such that $\sum_{j \in [p]} S_j= k$, where the component $S_j$ of $S$ is the number of seats allocated to party $j$.  

An apportionment method $M$ is said to be {\em weakly proportional}~\cite[p.~97]{BaYo82a} if for each apportionment problem $(u,k)$ such that there exist a vector $S \in (\mathbb{Z_+})^p$ satisfying (i) $\sum S_j= k$, and (ii) $u= \lambda S$ for some positive real $\lambda$ it holds that $M(u,k)= S$. We will say that such $S$ is a {\em strictly proportional solution} and that the apportionment problem $(u,k)$ {\em admits a strictly proportional solution}.~\citet{BaYo82a} proved that, whenever an apportionment problem admits a strictly proportional solution, such a solution is unique.

According to~\citet{BaYo82a}, weak proportionality is one of the ``rock-bottom requirements that must be satisfied by any method that is worthy of consideration.''

\citet{brill2018multiwinner} developed a procedure to transform any
approval-based multi-winner voting rule $F$ into
an apportionment method $M_F$. Their procedure executes the following steps: 
(i) it transforms an apportionment problem $(u,k)$ into
an approval-based multi-winner election $(\calA, k)$; 
(ii) it applies the voting rule $F$ to obtain a winning committee $W$; and 
(iii) it derives a seat allocation for $(u, k)$ from $W$. 

In more detail, in step (i), given $u=(u_1, \dots, u_p)$ and $k$, 
we create a set of candidates $C_j$ of size $k$ for each party $j$; these sets of candidates are pairwise disjoint. The set of candidates of the multi-winner election is then $\cup_{j\in [p]}C_j$. Further, for each $j\in [p]$
we create $u_j$ voters whose approval ballot is $C_j$. 
Let $(\calA, k)$ be the resulting approval-based multi-winner election;
we say that $(\calA, k)$
is {\em induced} by $(u, k)$.
In step (ii), we apply $F$ to $(\calA, k)$ to obtain a committee $W$.
In step (iii), for each $j\in [p]$ we allocate $|W\cap C_j|$ seats to party $j$.

It turns out that if $F$ satisfies FPR, the resulting apportionment method $M_F$
is weakly proportional.

\begin{theorem}
    If an approval-based multi-winner voting rule $F$ satisfies FPR, then the apportionment method $M_F$ is weakly proportional.
\end{theorem}
\begin{proof}
    Fix an approval-based multi-winner voting rule $F$ that satisfies FPR.
    Consider an apportionment problem $(u,k)$ such that there exists a vector $S \in (\mathbb{Z}_+)^p$ satisfying (i) $\sum_{j\in [p]} S_j= k$, and (ii) $u= \lambda S$ for some positive real $\lambda$. Let
    $(\calA, k)$ be the approval-based multi-winner election induced by $(u,k)$.
    To show that $M_F$ is weakly proportional, we need to prove that, given $(\calA, k)$, the rule $F$ outputs a committee $W$ that satisfies  
    $|W \cap C_j|= S_j$ for all $j\in [p]$.
    
    First, we will argue that a committee $W$ of size $k$ provides FPR for $(\calA, k)$ if and only if $|W \cap C_j|= S_j$ for all $j\in [p]$. 
    For each $j\in [p]$, let $N_j$ be the set of voters with ballot $C_j$, 
    and let $n=\sum_{j\in [p]}u_j$.
    
    For the `if' direction, we can set $v_{i,w}= \frac{1}{S_j}$ for each voter $i\in N_j$ and each $w\in W\cap C_j$, and $v_{i,w}= 0$ otherwise; these values satisfy all the conditions in Definition~\ref{def:fpr} (observe that $u= \lambda S$ and $\sum_{j\in [p]} S_j= k$ implies $\lambda = \frac{n}{k}$). 
    
    For the `only if' direction, suppose that $W$ provides FPR, as witnessed
    by a collection $(v_{i, w})_{i\in [n], w\in W}$. Fix a $j\in [p]$.
    We have $v_{i, w}=0$ for each $i\in N_j, w\in W\setminus C_j$, so
    $$
    \sum_{i\in N_j}\sum_{w\in W\cap C_j}v_{i, w} = |N_j|=u_j.
    $$
    On the other hand, 
    by FPR we have 
    $$
    \sum_{i\in N_j}v_{i, w} = \frac{n}{k}=\lambda\quad\text{for each $w\in W\cap C_j$}.
    $$
    It follows that $|W\cap C_j| = u_j/\lambda =S_j$.

    Now, by construction, $(\calA, k)$ admits a committee that provides FPR:
    e.g., we can take the first $S_j$ candidates from each set $C_j$, $j\in [p]$.
    As $F$ satisfies FPR, it must output a committee that provides FPR on $(\calA, k)$. But we have argued that every such committee contains exactly
    $S_j$ candidates from $C_j$ for all $j\in [p]$, which is what we set out to prove.
\end{proof}

In contrast, PR does not imply weak proportionality. Indeed, consider the apportionment problem given by $(u,k)$, where $u=(17,34)$ and $k= 9$. Any apportionment method that is weakly proportional must output $S= (3, 6)$ for this apportionment problem. However, since $\frac{n}{k}= \frac{51}{9}$ is not an integer, a voting rule that satisfies PR can output any committee of size $k$ on the approval-based multi-winner election induced by $(u,k)$.

\section{Discussion}\label{sec:disc}
Our results highlight a difficulty with the notion of EJR: this axiom
is incompatible with (fractional) perfect representation, which is a very desirable property in parliamentary elections and other settings where fairness is of paramount importance. This perspective is further supported
by the importance of weak proportionality in the context of apportionment
and the relationship between weak proportionality and FPR established in Section~\ref{sec:apportion}.

We therefore propose an alternative to the EJR axiom, Proportional Justified Representation, 
which is motivated by similar considerations
(namely, ensuring that large cohesive groups of voters are allocated several representatives),
but does not conflict with PR. %
PJR also has further attractive properties: it is satisfied by several well-known multi-winner rules
(for some of these rules we have to additionally require that $k$ divides $n$), 
some of which are efficiently computable, and, just like EJR, 
it provides a justification for using the harmonic weight vector $(1, \frac{1}{2}, \frac{1}{3}, \dots)$
as the default weight vector for PAV. 
   
However, the results of Section~\ref{sec:as} can be viewed as an argument in favour of EJR: every committee that
provides EJR guarantees high levels of average satisfaction to members of large cohesive groups, 
whereas the guarantee offered by committees that provide PJR is, in general, much weaker.
Thus, one can think of EJR as a more pragmatic requirement: for every ballot profile
a committee that provides EJR (and, as shown by \citeauthor{aziz:scw}~\cite{aziz:scw},
such a committee is guaranteed to exist) ensures that members of large cohesive groups 
are happy on average, at the cost of possibly ignoring other agents. In some applications
of multi-winner voting such a trade-off may be acceptable. Consider, for instance, an academic department
where members of different research groups pool their funding to run a departmental seminar.
Faculty members have preferences over potential speakers, with members of each research group
agreeing on a few candidates from their field. Choosing speakers so as to please the members
of large research groups may be a good strategy in this case, even if this means that some members
of the department will not be interested in any of the talks. Indeed,  
if very few talks are of interest to members of a large group, 
this group may prefer to withdraw its contribution 
to the funding pool and run its own event series.    

In contrast, when selection committees that will have to vote on issues, 
we may prefer voting rules
that satisfy PJR, are efficiently computable, and are close to (F)PR:
one such rule is MMS~\cite{sanchez2021maximin}, which satisfies PJR, 
can be evaluated in polynomial time, and
offers a worst-case constant factor approximation to the rule leximax-Phragm\'en~\cite{CeSt21a} that satisfies PR. 

The perfect representation axioms, i.e., PR and FPR, are very different in spirit from JR and its variants: while every election admits a JR committee, there are many elections for which (fractional) perfect representation is not achievable. However, whenever the (F)PR axiom applies, it places very strong constraints on the behaviour of a voting rule, forcing it to choose an (F)PR committee. In contrast, the JR axiom is much easier to satisfy: e.g., for $k=1$ it suffices to choose a candidate approved by at least one voter.
Consequently, while (F)PR implies PJR at the level of committees
(Theorems~\ref{thm:pjr-pr} and~\ref{thm:pjr-fpr}), this is not the case at the level of voting rules. 
It is also instructive to compare these axioms from an algorithmic perspective: while JR and its variants can be satisfied by polynomial-time 
computable voting rules, PR and FPR are fundamentally incompatible with polynomial-time computability (assuming P\,$\neq$\,NP).
Indeed, it would be interesting to identify scenarios that admit efficient algorithms for computing FPR committees.

\section*{Acknowledgments}
This research was supported in part by the Spanish Ministerio de
Econom\'ia y Competitividad (project HERMES-SMARTDRIVER
TIN2013-46801-C4-2-R), by the Spanish Mi\-nis\-terio de Ciencia,
Innovaci\'on y Universidades (project AUDACity TIN2016-77158-C4-1-R),
by the Madrid Government (Comunidad de Madrid-Spain) (project e-Madrid
S2013/ICE-2715), by the Madrid Government (Comunidad de Madrid-Spain) under the Multiannual Agreement with UC3M in the line of Excellence of University Professors
(EPUC3M21) in the context of the V PRICIT (Regional Programme of Research and Technological Innovation), by the Austrian Science Fund (FWF): grant 10.55776/P31890 and netidee SCIENCE grant [10.55776/PAT7221724], and by ERC Starting Grant 639945.

\bibliographystyle{plainnat}    

\bibliography{dhondt}

\end{document}